\documentclass[12pt]{article}

\usepackage{amssymb,fullpage}
\usepackage{amsmath}

\usepackage{ifpdf}
\usepackage[T1]{fontenc}
\usepackage[utf8]{inputenc}
\usepackage{amsfonts,amsthm,mathtools,mathrsfs}
\usepackage{braket,physics}
\usepackage{comment}
\usepackage{indentfirst}
\usepackage{enumerate}
\usepackage{booktabs}
\usepackage{tabularx}
\usepackage{makecell}
\usepackage{stmaryrd}
\usepackage{here} 
\usepackage[numbers,sort&compress]{natbib}
\ifpdf
  \usepackage{graphicx}
  \usepackage{hyperref}
\else
  \usepackage[dvipdfmx]{graphicx}
  \usepackage[dvipdfmx,unicode]{hyperref}
\fi

\newcommand{\bP}{\mathbb P}
\newcommand{\Dpa}{\mathrm{D_{pa}}}
\newcommand{\Dop}{\mathrm{D_{op}}}
\newcommand{\Cpa}{\mathrm{C_{pa}}}
\newcommand{\Cop}{\mathrm{C_{op}}}
\newcommand{\sep}{; }
\DeclareMathOperator*{\bLor}{\bigvee}

\newtheorem{thm}{Theorem}[section]

\title{Modeling the Disjunction Effect within Classical Probability:\\ A New Decision Process Model and Comparison with Quantum-like Models}
\author{Ryo Nasu\thanks{Corresponding author. Email: \href{mailto:yd.06y.2997@s.thers.ac.jp}{yd.06y.2997@s.thers.ac.jp}} \and Yoshihiro Maruyama}
\date{}

\begin{document}

\maketitle
\begin{center}
Department of Complex Systems Science, Graduate School of Informatics, Nagoya University\\
Furo-cho, Chikusa-ku, Nagoya, Aichi 464-8601, Japan
\end{center}

\begin{abstract}
The disjunction effect in human decision making is often taken to show that the classical law of total probability is violated, motivating quantum-like models. We re-examine this claim for the Prisoner's Dilemma disjunction effect. Under the mental-event reading of the opponent-choice events, the conventional classical decision-process model implicitly builds in a certainty-only premise: its standard partition assumptions leave no room for ambiguity, forcing every participant to be certain that the opponent will defect or will cooperate. We relax this by introducing a new classical model in which each participant carries a continuous expectation parameter representing the anticipated likelihood of opponent defection, and the participant pool is partitioned by expectation level; the resulting ambiguity set is precisely the union of the interior expectation bins. In contrast, under the quantum-like event semantics, ambiguous pure states are generic (dense and of full unitarily invariant measure on the unit sphere), so ``certainty states'' are mathematically exceptional. We prove that an instance of our classical model can realize any empirically observed triple of defection rates across the three information conditions, including strong disjunction-effect patterns, while strictly obeying the classical law of total probability. We further prove that for any such triple produced by a standard quantum-like model of the same experiment, there exists a classical instance reproducing it exactly. In this sense, classical and quantum-like approaches have the same observable-rate expressiveness; their substantive difference lies in how ambiguity is represented and in their respective event semantics, not in a breakdown of classical probability.
\end{abstract}

\bigskip
\noindent\textbf{Keywords:}

disjunction effect \sep 
decision-process model \sep 
law of total probability \sep 
quantum-like model \sep 
Prisoner's Dilemma


\section{Introduction} \label{section:introduction}
The disjunction effect, introduced by Tversky and Shafir \cite{Shafir_1992_1}, refers to a violation of the Sure-Thing Principle, one of the fundamental axioms of rational decision-making. According to the Sure-Thing principle, if a person prefers a particular option both when an event occurs and when it does not occur, then the person should also prefer the same option when it is unknown whether the event occurs or not \cite{Savage_1954}.  
In their experiment \cite{Shafir_1992_1}, participants were told that they had just played a gamble in which they had a 50\% chance of winning \$200 and a 50\% chance of losing \$100. The first gamble had already been played, and participants were asked whether they would like to play the same game again under three different conditions. In one condition, they were told they had just won the first game; in another condition, they were told they had lost. In the third condition, they were not informed of the outcome of the first game. The results were counterintuitive. Participants tended to choose to play the second gamble when they knew they had either won or lost the first game, but they were more likely to refuse to play when the outcome was unknown. The authors interpreted this behavior as a violation of the Sure-Thing Principle, and called it the disjunction effect \cite{Shafir_1992_1}.

In a subsequent study, Shafir and Tversky \cite{shafir_1992_2} demonstrated that the disjunction effect can also occur in a Prisoner's Dilemma (PD) game. In the study, participants played against a virtual opponent. (They were told they were playing against another participant, but this was not actually the case.) Each player could choose to either defect or cooperate. If both players cooperated, they each received a moderate reward. If both defected, they each received a lower payoff. However, if one player defected while the other cooperated, the defector received the highest reward, while the cooperating player received the worst possible outcome. Each participant played three rounds of the PD game under three different conditions. In one condition, they were told the opponent had chosen to defect; in another condition, they were told the opponent had chosen to cooperate; and in the third condition, they were not informed of the opponent's choice. The results were as follows: 97\% of participants chose to defect when they knew the opponent had defected, and 84\% chose to defect when they knew the opponent had cooperated; but only 63\% chose to defect when they were not told the opponent's choice. This result was also regarded as an instance of the disjunction effect \cite{shafir_1992_2}.

These findings have sparked considerable discussion. In particular, the disjunction effect has been widely interpreted as challenging the law of total probability (LTP) in classical probability theory. Related earlier quantum-like work had already connected cognitive interference phenomena with modifications of the classical total-probability formula and Hilbert-space representations of mental states \cite{Khrennikov_2006}. In the subsequent disjunction-effect literature, Hilbert-space models were developed explicitly as alternatives to classical models (e.g., \cite{Busemeyer_2006, Pothos_2009}).\footnote{For broader discussions on quantum cognition, see also \cite{BB12,Kh23,M1,M2}.}

On the other hand, several authors have analyzed the disjunction effect while remaining within the classical framework. Focusing on the PD experiment, Xin et al.  \cite{Xin_2022} and Moreira and Wichert \cite{Moreira_Wichert_2018} show that the observed pattern can be reproduced by suitably enriching classical models—through more realistic Markov dynamics or classical Bayesian networks with latent variables—without abandoning classical Kolmogorovian probability. In another direction, Gelastopoulos and Le Mens \cite{Gelastopoulos_2024} argue that the usual way of reading the disjunction effect as a violation of the classical LTP is itself problematic. These works indicate that the apparent tension between the disjunction effect and classical probability may have as much to do with how the decision process and events are modeled as with the probabilistic framework per se. The present paper also belongs to this general line of classical re-examination, but it does so by rethinking how the opponent's choice is interpreted and represented in the PD disjunction-effect analysis. 

The starting point of the present paper is that, in modeling the PD disjunction-effect experiments, an important distinction between two readings of the events describing the opponent's choice has often not been made explicit.
Under an objective (\emph{world-event}) reading, these events describe the opponent's actual action in the task; under a subjective (\emph{mental-event}) reading, they describe the participant's internal expectation about the opponent's move. Our analysis then focuses on the mental-event reading, which is natural when the goal is to model the decision process itself. Under this reading, the usual identification of the three information conditions with the corresponding conditional probabilities has the effect that ``expecting defection/cooperation'' is treated as ``knowing that the opponent defected/cooperated.'' Given this interpretation, the standard partition assumptions on the opponent-choice events—that they are mutually exclusive and exhaustive—entail that, prior to choosing, each participant is already in one of two certainty states: certain the opponent will defect or certain the opponent will cooperate. This implicit \emph{certainty-only} premise, when built into the classical LTP, forces an apparent conflict with the PD data. The strength and consequences of this premise do not appear to have been made explicit in the existing literature; they will be spelled out in Section \ref{sec:background}.

The present paper makes this implicit modeling choice explicit and replaces it
with a more flexible description of mental events. We introduce a new
classical (non-dynamical) decision-process model (DPM) in which each
participant is endowed with a continuous expectation parameter
$\bP\in[0,1]$ representing the anticipated likelihood that the opponent will
defect, and the participant pool is partitioned according to levels of $\bP$.
Under a natural partition assumption on these expectation levels, the model
yields a classical LTP decomposition \eqref{eq:LTP_P} for the overall
defection rate that is compatible with the PD findings. An instance of this 
model is shown (Theorem~\ref{thm:disjunction_realization}) to realize any 
empirically observed triple of defection rates, including strong 
disjunction-effect patterns, while strictly obeying the classical LTP.

In parallel, a standard quantum-like formulation of the PD disjunction effect
is treated in the same decision-process perspective. Formulated in a common
notation, the quantum-like model produces a quantum LTP with an interference
term and the same three observable defection rates. We prove
(Theorem~\ref{cor:quantum_disjunction_realization}) that for any such triple
generated by the quantum-like model there exists an instance of the
classical DPM that reproduces it exactly. Comparing the two models side by
side highlights that both classical and quantum-like frameworks allow
decisions under ambiguous expectations, but realize this capacity in
systematically different ways that reflect the familiar contrast between
classical (set-based) and quantum (Hilbert-space) interpretations of events
and probabilities.

\paragraph{Contributions}
The main contributions of this paper are as follows:
\begin{enumerate}
  \item It is shown that, under the mental-event reading of the
        opponent-choice events and the usual identification of information
        conditions with conditional probabilities, the conventional classical
        DPM implicitly assumes a strong certainty-only premise via the
        standard partition assumptions, and that this hidden assumption, rather
        than the classical LTP itself, can be viewed as the source of the
        apparent conflict with the PD disjunction-effect data.

  \item A non-dynamical classical DPM with a continuous expectation parameter
        $\bP$ is proposed that preserves the LTP decomposition
        \eqref{eq:LTP_P}, remains consistent with the PD findings, and
        contains the conventional DPM as a boundary case. The main existence
        result (Theorem~\ref{thm:disjunction_realization}) shows that an 
        instance of this model can realize any triple of empirical defection rates, 
        including strong disjunction-effect patterns, while strictly obeying the
        classical LTP. In contrast to existing classical PD models based on
        enriched dynamics or latent processes \cite{Xin_2022, Moreira_Wichert_2018},
        this is achieved within a classical event structure by refining
        the partition of mental events, without modeling explicit temporal
        evolution.

  \item The semantic role of ambiguity is made explicit and compared across
        the three DPMs at the level of event semantics. We define the
        ambiguity set
        \(
        A := \llbracket \top\rrbracket \setminus
        \bigl(\llbracket D_{\mathrm{op}}\rrbracket \cup \llbracket C_{\mathrm{op}}\rrbracket\bigr),
        \)
        and Theorem~\ref{thm:ambiguity-sets} shows that (a) in the conventional classical
        DPM one has $A=\varnothing$ (ambiguity is excluded by construction);
        (b) in the proposed classical DPM, $A$ is exactly the union of the interior
        expectation bins (ambiguity is represented by heterogeneity across participants);
        and (c) in the quantum-like DPM, the ambiguous pure states form a dense set of
        full unitarily invariant measure on the unit sphere (certainty states are
        exceptional).

  \item The proposed classical DPM and a standard quantum-like DPM for the PD
        disjunction effect are formulated in a common notation and compared
        semantically. Theorem~\ref{cor:quantum_disjunction_realization}
        establishes that, at the level of the three observable defection
        rates, any pattern produced by the quantum-like model can be matched
        by an instance of the classical model. This highlights their
        shared ability to accommodate ambiguous expectations and clarifies
        that their differences arise from their respective interpretations of
        events and probabilities and from where they locate ambiguity
        (heterogeneity across participants versus superposition within a
        cognitive state), rather than from raw probabilistic expressiveness.
\end{enumerate}
Note that, to the best of our knowledge, mathematical results of the form
Theorem~\ref{thm:disjunction_realization}, Theorem~\ref{cor:quantum_disjunction_realization},
or Theorem~\ref{thm:ambiguity-sets} have not been known in the quantum cognition literature.

\paragraph{Organization}
Section~\ref{sec:background} reviews the standard LTP-based formulation of the
PD disjunction effect, introduces the two readings of the opponent-choice
events, and makes the conventional classical DPM under the mental-event
reading explicit, including its implicit certainty-only premise.
Section~\ref{section:classic} then develops the revised classical DPM by
relaxing that premise, introducing the expectation parameter $\bP$, and
deriving the corresponding LTP decomposition~\eqref{eq:LTP_P}.
Section~\ref{section:quantum} reviews the quantum-like formalism, states the
quantum analogue of the LTP, formulates the associated quantum-like DPM in a
common notation, and uses it for a side-by-side comparison with the proposed
classical model. Section~\ref{mathresults} presents the supporting
mathematical results, including the existence theorem for classical
realizations of disjunction-effect triples and its consequence for quantum
disjunction probabilities. In Section~\ref{sec:semantics_comparison}, we complement Section~\ref{section:quantum}
by contrasting the event semantics underlying the conventional classical 
DPM, the proposed classical DPM, and the quantum-like DPM, thus proving the aforementioned Theorem~\ref{thm:ambiguity-sets}.

\section{Background and setup for the proposed decision-process model}
\label{sec:background}

\subsection{Review of arguments for violations of the classical law of total probability} 
\label{subsec:review_LTP_violation}

In this section, we review key arguments that support the view that the disjunction effect constitutes a violation of the classical LTP (see, e.g., \cite{Pothos_2022} for a review). Throughout this paper, we focus on the PD experiment; nevertheless, we expect that the basic diagnosis developed here extends more broadly to other disjunction-effect paradigms, even if the details may need to be reformulated from case to case.

We denote by $\Dpa$ the event in which the participant chooses to defect. $\Dop$ and $\Cop$ denote the events in which the virtual opponent chooses to defect and cooperate, respectively; the precise interpretation of those events will be discussed in Section \ref{subsec:events_interpretation}. We write $\Pr(E\mid F)$ for the probability of event $E$ conditional on event $F$. The classical LTP describing the participant's choice is given by
\begin{equation}\label{eq:LTP_DC}
    \Pr (\Dpa) = \Pr (\Dpa \mid \Dop) \cdot \Pr(\Dop) + \Pr (\Dpa \mid \Cop) \cdot \Pr(\Cop) ,
\end{equation}
under the assumptions that 
\begin{equation}\label{eq:empty_DC}
    \Dop \land \Cop \equiv \varnothing,
\end{equation}
\begin{equation}\label{eq:sure_DC}
    \Dop \lor \Cop \equiv \top
\end{equation}
hold. Here, $\varnothing$ and $\top$ denote the empty event and the sure event, respectively, and $\equiv$ denotes logical equivalence. Assumption \eqref{eq:empty_DC} states that the two events $\Dop$ and $\Cop$ are incompatible, in the sense that they cannot both hold simultaneously. Assumption \eqref{eq:sure_DC} states that the disjunction of $\Dop$ and $\Cop$ coincides with the sure event, meaning that at least one of them must hold in every possible situation. Together, $\Dop$ and $\Cop$ form a complete partition of the event space: one of them must occur. In particular, these assumptions imply that
\begin{equation}\label{eq:sum1_DC}
\Pr(\Dop) + \Pr(\Cop) = 1.
\end{equation}
Thus, equation \eqref{eq:LTP_DC} is taken to imply that $\Pr(\Dpa)$ should lie between $\Pr(\Dpa \mid \Dop)$ and $\Pr(\Dpa \mid \Cop)$.

To connect these equations to the experimental data, the literature typically identifies the observed defection rates in the three distinct conditions—``unknown'', ``told defect'', and ``told cooperate''—with the following theoretical quantities, respectively:
\begin{equation} \label{eq:identification}
\Pr(\Dpa)\ \text{(``unknown'')},\quad
\Pr(\Dpa\mid \Dop)\ \text{(``told defect'')},\quad
\Pr(\Dpa\mid \Cop)\ \text{(``told cooperate'')}.
\end{equation}

Under this identification, however, the experimental results show that $\Pr(\Dpa)$ is smaller than both $\Pr(\Dpa \mid \Dop)$ and $\Pr (\Dpa \mid \Cop)$, as noted in Section \ref{section:introduction}, which contradicts equation \eqref{eq:LTP_DC}. 
This has been cited as evidence that the classical LTP is violated in the experiment.

\subsection{Two readings of the opponent-choice events}
\label{subsec:events_interpretation}

This line of reasoning has been repeated by many authors, as a claim that the classical LTP is violated in the disjunction-effect experiment. However, the precise interpretation of the opponent-choice events—namely $\Dop$ and $\Cop$—is often left implicit. A closer examination suggests that the literature divides into two readings. In this section, we clarify what these events are intended to denote; this examination is specific to this paper and forms the first step toward the new DPM developed in Section \ref{section:classic}. 

The interpretation of the events $\Dop$ and $\Cop$ divides into the following two readings:
\begin{description}
    \item[(i) \emph{Objective} (\emph{world-event}) reading.] $\Dop$ and $\Cop$ are taken to denote objective events concerning the virtual opponent's actual action in the (hypothetical) world of the task: the opponent defects or cooperates (e.g., \cite{Moreira_Wichert_2018}).
    \item[(ii) \emph{Subjective} (\emph{mental-event}) reading.] $\Dop$ and $\Cop$ are treated as internal events in which the participant expects, prior to making a decision, that the opponent will defect or cooperate (e.g., \cite{Khrennikov_2009, Busemeyer_2006, Tesar_2020}).
\end{description}

With respect to the objective reading (i), however, Gelastopoulos and Le Mens \cite{Gelastopoulos_2024} argued that equation \eqref{eq:LTP_DC} does not contradict the experimental findings. (Although their paper is framed as a general rebuttal to claims of the classical LTP violation, their core argument seems to presuppose the interpretation (i).) Also, Khrennikov \cite{Khrennikov_2009} had earlier stated that the apparent violation of the classical LTP is not surprising, from a closely related viewpoint.
The key point in Gelastopoulos and Le Mens \cite{Gelastopoulos_2024} is that the experimental manipulation concerns information rather than world events: conditioning on $\Dop$ or $\Cop$ concerns whether the (virtual) opponent actually defected or cooperated, whereas the ``told defect/cooperate'' conditions concern what the participant is told. Hence the identification in \eqref{eq:identification} is not appropriate under reading (i), as it conflates conditioning on world events with conditioning on informational conditions.
When the opponent's choice is not disclosed, it is standard to assume that $\Dpa$ is independent of $\Dop$ and $\Cop$, which yields
\begin{equation}
\Pr(\Dpa \mid \Dop) = \Pr(\Dpa), \qquad
\Pr(\Dpa \mid \Cop) = \Pr(\Dpa).
\end{equation}
Moreover, since the opponent's choice is either to defect or to cooperate, adopting assumptions \eqref{eq:empty_DC} and \eqref{eq:sure_DC} is natural and reasonable; in particular, \eqref{eq:sum1_DC} follows. Consequently, equation \eqref{eq:LTP_DC} is necessarily satisfied regardless of the experimental results.

In this way, equation \eqref{eq:LTP_DC} does not contradict the experimental results when $\Dop$ and $\Cop$ are interpreted as objective events concerning the opponent's action; however, under this interpretation, equation \eqref{eq:LTP_DC} is reduced to a trivial identity and is not helpful as a mathematically informative constraint on the decision process associated with the experiment. Accordingly, we do not pursue interpretation (i) further and focus on interpretation (ii) in what follows.

\subsection{The conventional decision-process model} \label{subsec:decision-process_model_review}

In this section, we consider reading (ii), under which $\Dop$ and $\Cop$ are interpreted as the participant's mental events. Recall that, on this reading, the participant forms an expectation about the opponent's action prior to choosing their own action: expecting defection corresponds to $\Dop$, and expecting cooperation corresponds to $\Cop$ \cite{Tesar_2020}.
When assumptions \eqref{eq:empty_DC} and \eqref{eq:sure_DC} are adopted and \eqref{eq:LTP_DC} is used to model the participant's choice under this reading, this amounts to adopting a particular premise about decision process. Our aim here is to make that implicit DPM explicit, clarify its assumptions, and set up the new DPM developed in Section \ref{section:classic}.

Assuming the identification in \eqref{eq:identification}, the intended interpretation of $\Dop$ and $\Cop$ becomes more specific. Under this identification, the condition ``the participant expects defection'' is identified with the condition ``the participant is told (and hence knows) that the opponent defected,'' and similarly for cooperation. In other words, expectation is treated as if it were knowledge. In this reading, ``expectation'' is not a mild inclination about what is likely, but a \emph{certainty} that the corresponding outcome will occur.

With this interpretation in place, the implications of \eqref{eq:empty_DC} and \eqref{eq:sure_DC} can be stated clearly. Assumption \eqref{eq:empty_DC} entails that a participant cannot simultaneously hold a certainty that the opponent will defect and a certainty that the opponent will cooperate. Assumption \eqref{eq:sure_DC} entails that, before choosing, each participant is certain of exactly one of these possibilities—defection or cooperation. In particular, it rules out participants who proceed with an \emph{ambiguous} or \emph{mixed expectation} about the opponent's action.

For concreteness, Figure~\ref{fig:DPM} (a) visualizes the conventional DPM under reading (ii) together with the identification in \eqref{eq:identification}. The top panel shows the participant pool. The middle layer splits it into two subpopulations—those certain the opponent will defect and those certain the opponent will cooperate—reflecting the implications of \eqref{eq:empty_DC} and \eqref{eq:sure_DC}. The bottom layer maps each subpopulation into the two actions—defection via $\Pr(\Dpa\mid \Dop)$ and $\Pr(\Dpa\mid \Cop)$, and cooperation via $\Pr(\Cpa\mid \Dop)$ and $\Pr(\Cpa\mid \Cop)$ (with $\Cpa$ denoting cooperation). Under these assumptions, the pipeline instantiates the classical LTP in \eqref{eq:LTP_DC}.

Our focus in Section \ref{section:classic} will be on assumption \eqref{eq:empty_DC} and \eqref{eq:sure_DC}. The basic idea is to replace the requirement that every participant hold a certain expectation about the opponent's action with the more realistic premise that participants may entertain ambiguous expectations that allow for both possibilities to varying degrees. Building on this premise, Section \ref{section:classic} develops a new DPM that remains within the classical framework yet does not conflict with the observed disjunction-effect pattern.

\begin{figure}[p]
  \centering

  \includegraphics[width=0.74\textwidth]{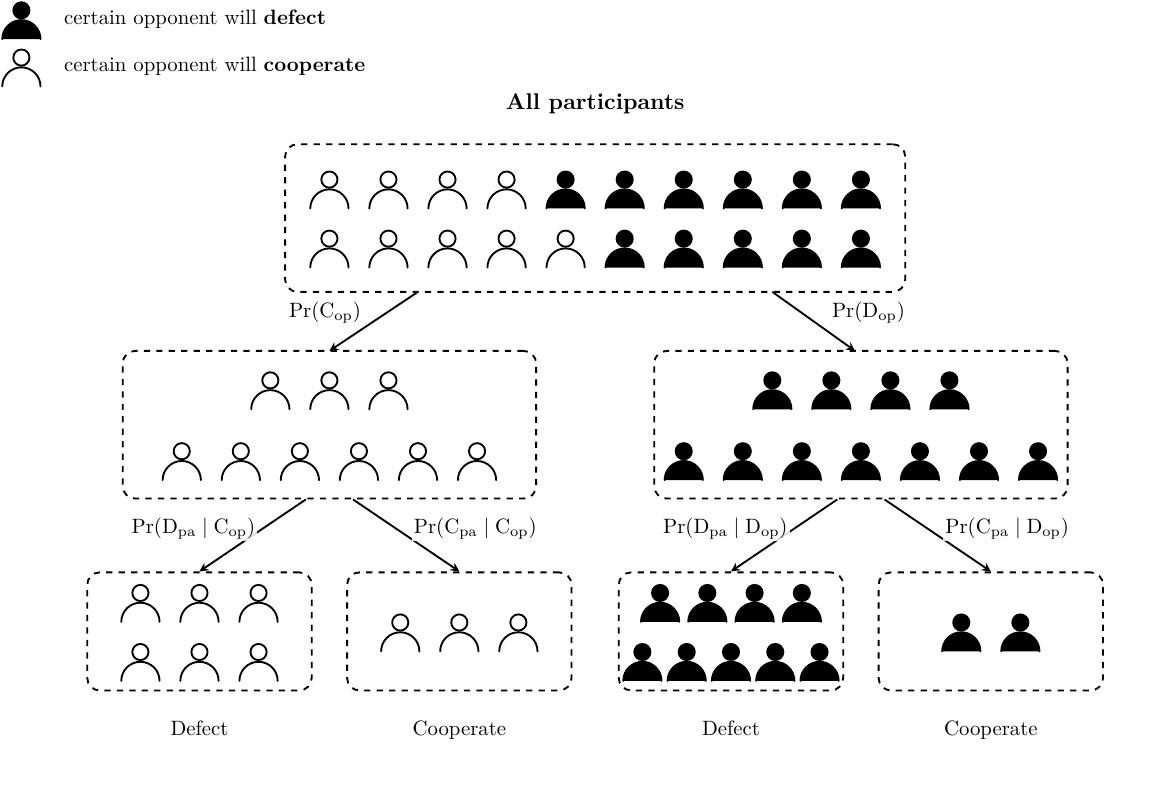}\\[1mm]
  \textbf{(a)} Conventional decision-process model  

  \vspace{2mm}

  \includegraphics[width=0.74\textwidth]{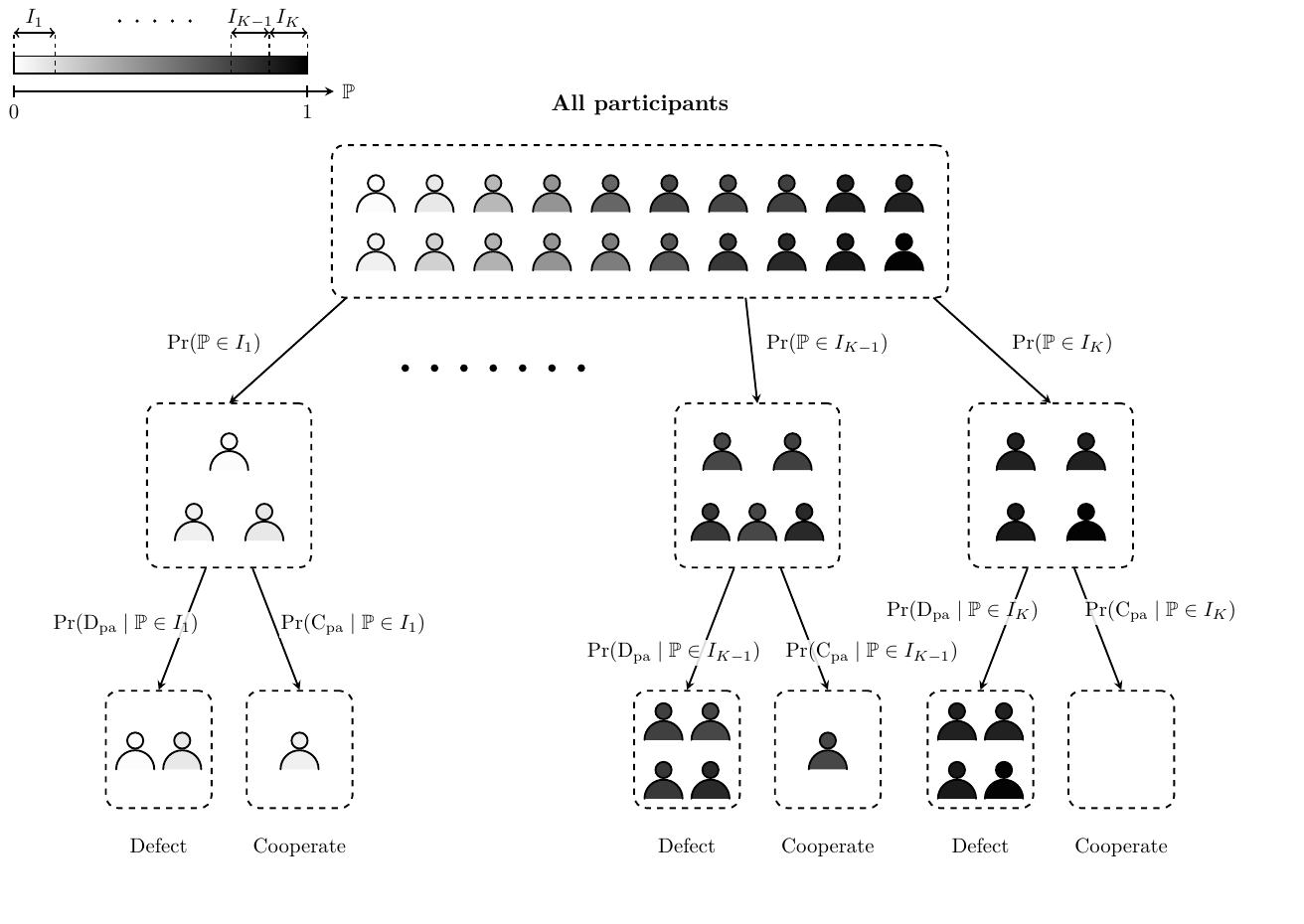}\\[1mm]
  \textbf{(b)} Proposed decision-process model

   \caption{Classical decision-process models.
   (a) Conventional model. Participants are split into two certainty states, ``opponent defects'' ($\Pr(\Dop)$) or ``opponent cooperates'' ($\Pr(\Cop)$), as implied by \eqref{eq:empty_DC}, \eqref{eq:sure_DC}; choices then satisfy equation \eqref{eq:LTP_DC}.
   (b) Proposed model. Participants are grouped by an expectation parameter $\bP\in[0,1]$ into bins $I_k$ with weights $\Pr(\bP\in I_k)$; actions follow $\Pr(\Dpa\mid\bP\in I_k)$, yielding \eqref{eq:LTP_P} under \eqref{eq:empty_P}, \eqref{eq:sure_P}. See Sections \ref{subsec:decision-process_model_review} and \ref{section:classic} for details.}
   \label{fig:DPM}
\end{figure}

\paragraph{Related work: classical accounts of the disjunction effect}
The present approach is related to, but distinct from, recent attempts to account for the PD disjunction effect within a classical framework. 

Xin et al. \cite{Xin_2022} propose a more realistic Markov belief-action model for the PD experiment, explicitly targeting unrealistic assumptions about the decision process and emphasizing subjective uncertainty via a degree of subjective uncertainty (DSN) based on Shannon entropy. They retain the standard four belief-action events and the usual partition assumptions
(corresponding to \eqref{eq:empty_DC} and \eqref{eq:sure_DC}), and incorporate uncertainty through DSN in the dynamics of the Markov process. 

Moreira and Wichert \cite{Moreira_Wichert_2018} investigate whether introducing latent type variables into a classical Bayesian network suffices to reproduce the disjunction-effect probabilities observed in the PD data; they show that classical Bayesian networks with latent variables (within a Kolmogorovian framework) can be constructed so as to match this pattern, but only at the cost of increasingly complex latent structure and without a single simple model covering both known and unknown conditions, which motivates their move to a quantum-like network. Their analysis is conducted under the world-event reading and keeps \eqref{eq:empty_DC} and \eqref{eq:sure_DC} in place. 

By contrast, the present paper remains non-dynamical and does not introduce additional latent variables; it stays within a classical probability framework but changes the modeling of mental events themselves. By relaxing the certainty-only partition on opponent-choice expectations, rather than enriching the dynamics or adding hidden structure, it shows that the PD disjunction-effect pattern can be reconciled with the classical LTP in a comparatively simple way.

\section{Proposal of a new classical decision-process model} \label{section:classic}
We now introduce a DPM that remains fully classical yet accommodates the disjunction-effect pattern. In Section \ref{subsec:decision-process_model_review} and Figure~\ref{fig:DPM} (a), we explained that the conventional DPM implicitly assumes that every participant forms prior certainty about the opponent's choice before deciding their own action—either the opponent will definitely defect or will definitely cooperate. In contrast, we allow participants to hold ambiguous expectations about the opponent's action.

We introduce a continuous parameter $\bP \in [0,1]$ to represent a participant's degree of expectation that the opponent will defect. Each participant has their own value of $\bP$; values closer to $1$ indicate stronger belief in defection, and values closer to $0$ indicate stronger belief in cooperation.

Formally, choose a partition $0=a_0<a_1<\cdots<a_K=1$ and write $I_k=[a_{k-1},a_k)$. Our basic assumption is that the events $\{\,\bP\in I_k\,\}_{k=1}^K$ form a partition, i.e.,
\begin{equation}\label{eq:empty_P}
\forall\,k\neq \ell:\quad [\bP\in I_k]\;\land\;[\bP\in I_\ell] \equiv \varnothing,
\end{equation}
\begin{equation}\label{eq:sure_P}
\bLor_{k=1}^{K}\,[\bP\in I_k] \equiv \top.
\end{equation}

These are the counterparts of assumptions \eqref{eq:empty_DC} and \eqref{eq:sure_DC} in the conventional DPM. Under \eqref{eq:empty_P} and \eqref{eq:sure_P}, we obtain
\begin{equation} \label{eq:sum1_P}
\sum_{k=1}^K \Pr(\bP\in I_k) = 1, 
\end{equation}
\begin{equation} \label{eq:LTP_P}
  \Pr(\Dpa) \;=\; \sum_{k=1}^{K} \Pr(\Dpa \mid \bP\in I_k)\, \Pr(\bP\in I_k) .
\end{equation}
Equation \eqref{eq:LTP_P} states the law of total probability for this model.

Figure~\ref{fig:DPM} (b) illustrates the proposed DPM: the participant pool is split into intervals $I_k$ with weights $\Pr(\bP\in I_k)$, and each bin maps into the two actions—defection via $\Pr(\Dpa\mid \bP\in I_k)$ and cooperation via $\Pr(\Cpa\mid \bP\in I_k)$. Given \eqref{eq:empty_P} and \eqref{eq:sure_P}, this construction yields the decomposition in \eqref{eq:LTP_P}.

We next compare the new DPM with the conventional DPM. The conventional DPM can be viewed as a special case of the present model. Since the events $\Dop$ and $\Cop$ correspond to participants who are certain of defection and certain of cooperation, respectively, in the present parametrization $\bP$, they align with the extreme values $\bP\approx 1$ and $\bP\approx 0$. Thus the conventional DPM is recovered when all probability mass is concentrated at the extremes: 
\begin{equation}
  \Pr(\bP\in I_k)=0 \quad (k\neq 1,K).
\end{equation}

Equivalently, the conventional model rules out participants whose expectations lie strictly between the two extremes. In this sense, our model extends existing approaches by relaxing the ``\emph{certainty-only}'' restriction.

The proposed DPM is informative because it shows how the disjunction-effect pattern can be handled within classical probability. Our model does not contradict the PD findings: existing disjunction-effect experiments report overall choice rates under three information conditions (unknown / told defect / told cooperate) but do not measure $\Pr(\Dpa\mid \bP\in I_k)$ for intermediate expectation levels ($k\neq 1,K$). Accordingly, the observed data are compatible with our formulation. A number of articles have argued that classical probability is too restrictive to accommodate the experimentally observed disjunction effect (e.g., \cite{Wang_2013,Blutner_2016,Tesar_2020}). In contrast, our model shows that the disjunction effect can be accommodated within a classical framework once ambiguous expectations are allowed. Moreover, assumptions \eqref{eq:empty_P} and \eqref{eq:sure_P} are not far-fetched; they are arguably more natural than \eqref{eq:empty_DC} and \eqref{eq:sure_DC}, since people often decide under ambiguous or mixed expectations about the opponent's action.

\section{Quantum-like decision-process model and relation to the proposed classical model}
\label{section:quantum}

The disjunction effect has often been regarded as paradoxical from the standpoint of classical probability, which has motivated quantum-like models based on Hilbert-space event semantics and a quantum analogue of the law of total probability (see, e.g., \cite{Busemeyer_2006,Pothos_2009}; see also \cite{Khrennikov_2015, Pothos_2022} for reviews). In this paper we have already shown that a classical decision-process model that permits ambiguous expectations can accommodate the observed pattern (Section \ref{section:classic}). In this section we introduce the quantum formulation, interpret it as a decision-process model, and place it alongside our proposed classical model, thereby clarifying both the common goal of modeling decisions under ambiguity and the substantive differences that follow from their respective semantics and mathematics. We focus here on a widely used Hilbert-space model of disjunction-effect experiments (in particular, the PD experiment), while noting that other quantum-like models explaining disjunction-effect data have also been proposed; for an overview, see Moreira and Wichert \cite{MoreiraWichert_2016}.

\subsection{Mathematical preliminaries and quantum analogue of the law of total probability}
\label{subsec:quantum_review}

We first review the derivation of the quantum analogue of the LTP. Within the quantum framework, the state of a system is represented by a unit vector $\psi$ in a Hilbert space $\mathscr{H}$. An event $E$ is represented by a closed subspace $\mathscr{H}_E \subset \mathscr{H}$; the notion of “event” in the quantum setting differs from the classical one, and we return to this conceptual point in Section \ref{subsec:quantum_DPM}. For any closed subspace $\mathscr{H}_E$, there exists a unique orthogonal projector $\hat{\mathcal{P}}_E$ onto it. By Born's rule, the probability of $E$ in state $\psi$ is
\begin{equation} \label{eq:Born_rule}
\Pr(E) \;=\; \big\| \hat{\mathcal{P}}_{E}\,\psi \big\|^2 .
\end{equation}
The conditional probability of $E$ given $F$ is given by Lüders' rule:
\begin{equation} \label{eq:Luder_rule}
\Pr(E \mid F) \;=\; \frac{\big\| \hat{\mathcal{P}}_{E}\,\hat{\mathcal{P}}_{F}\,\psi \big\|^2}{\big\| \hat{\mathcal{P}}_{F}\,\psi \big\|^2}
\qquad \text{whenever } \Pr(F)>0 .
\end{equation}

We now represent the quantum counterparts of the opponent-choice events, $\Dop$ and $\Cop$, by closed subspaces $\mathscr{H}_{\mathrm{D}}, \mathscr{H}_{\mathrm{C}} \subset \mathscr{H}$. 
In brief, $\mathscr{H}_{\mathrm{D}}$ is the subspace representing the “opponent defects” event in the quantum model, and $\mathscr{H}_{\mathrm{C}}$ represents “opponent cooperates”; how these quantum events relate to their classical counterparts will be discussed in Section \ref{subsec:quantum_DPM}.

Assumptions analogous to \eqref{eq:empty_DC} and \eqref{eq:sure_DC} are imposed by assuming that these subspaces are orthogonal and jointly span $\mathscr{H}$:
\begin{equation}\label{eq:subspace_perp}
\mathscr H_{\mathrm{D}} \perp \mathscr H_{\mathrm{C}},
\end{equation}
\begin{equation}\label{eq:subspace_direct_sum}
\mathscr H_{\mathrm{D}} \oplus \mathscr H_{\mathrm{C}} \;=\; \mathscr H .
\end{equation}
Let $\hat{\mathcal P}_{\mathrm D}$ and $\hat{\mathcal P}_{\mathrm C}$ be the orthogonal projectors onto these subspaces. Then \eqref{eq:subspace_perp}, \eqref{eq:subspace_direct_sum} are equivalent to
\begin{equation}\label{eq:projection_perp}
\hat{\mathcal{P}}_{\mathrm{D}}\,\hat{\mathcal{P}}_{\mathrm{C}} \;=\; \hat{0},
\end{equation}
\begin{equation}\label{eq:projection_sum}
\hat{\mathcal{P}}_{\mathrm{D}} \,+\, \hat{\mathcal{P}}_{\mathrm{C}} \;=\; \hat{1}.
\end{equation}

Let $\hat{\mathcal{Q}}$ denote the projector onto the choice event $\Dpa$. From \eqref{eq:Born_rule},
\begin{equation}
\Pr(\Dpa) \;=\; \big\| \hat{\mathcal{Q}}\,\psi \big\|^2 .
\end{equation}
Using assumption \eqref{eq:projection_perp} and \eqref{eq:projection_sum}, we obtain
\begin{equation}
\big\| \hat{\mathcal{Q}}\,\psi \big\|^2
\;=\; \big\| \hat{\mathcal{Q}}\,(\hat{\mathcal{P}}_{\mathrm{D}}+\hat{\mathcal{P}}_{\mathrm{C}})\,\psi \big\|^2
\;=\; \big\| \hat{\mathcal{Q}} \hat{\mathcal{P}}_{\mathrm{D}} \psi + \hat{\mathcal{Q}} \hat{\mathcal{P}}_{\mathrm{C}} \psi \big\|^2
\;=\; \big\| \hat{\mathcal{Q}} \hat{\mathcal{P}}_{\mathrm{D}} \psi \big\|^2 + \big\| \hat{\mathcal{Q}} \hat{\mathcal{P}}_{\mathrm{C}} \psi \big\|^2 + \Delta,
\end{equation}
where $\Delta$ is defined as
\begin{equation}
\Delta \;=\; \big\langle \psi,\; \hat{\mathcal{P}}_{\mathrm{C}} \hat{\mathcal{Q}} \hat{\mathcal{P}}_{\mathrm{D}}
\,+\, \hat{\mathcal{P}}_{\mathrm{D}} \hat{\mathcal{Q}} \hat{\mathcal{P}}_{\mathrm{C}} \;\psi \big\rangle .
\end{equation}
Using \eqref{eq:Born_rule} and \eqref{eq:Luder_rule}, we arrive at
\begin{equation}\label{eq:quantum_LTP}
\Pr(\Dpa) \;=\; \Pr(\Dpa \mid \Dop)\,\Pr(\Dop) \;+\; \Pr(\Dpa \mid \Cop)\,\Pr(\Cop) \;+\; \Delta .
\end{equation}
This is the quantum analogue of the LTP. The difference from \eqref{eq:LTP_DC} is the presence of the interference term $\Delta$, which allows the three observed rates (``unknown'', ``told defect'', ``told cooperate''), identified in \eqref{eq:identification}, to be jointly accommodated.

As an important corollary of assumptions \eqref{eq:projection_perp} and \eqref{eq:projection_sum}, we obtain the following normalization identity for the opponent-choice events:
\begin{equation}\label{eq:q_norm_DC}
\Pr(\Dop)+\Pr(\Cop)
= \|\hat{\mathcal P}_{\mathrm D}\psi\|^2+\|\hat{\mathcal P}_{\mathrm C}\psi\|^2
= \big\langle \psi,(\hat{\mathcal P}_{\mathrm D}+\hat{\mathcal P}_{\mathrm C})\psi \big\rangle
= \langle \psi,\psi\rangle
= 1.
\end{equation}
This identity is highly suggestive: in the quantum model the probabilities assigned to the two opponent-choice events always sum to one for any state $\psi$. We will use \eqref{eq:q_norm_DC} in the next subsection to contrast the quantum reading of events with the proposed classical DPM.

\subsection{Interpreting the quantum-like decision-process model}
\label{subsec:quantum_DPM}

In Section \ref{subsec:quantum_review}, we reviewed the mathematical side of the quantum-like approach. We now interpret that formalism as a DPM and clarify how the quantum-like DPM relates to, and differs from, the proposed classical DPM.

We first outline how the notions of events and probabilities differ in classical and quantum frameworks. In the classical setting, events are facts about the world whose truth values are independent of any act of measurement, and probabilities quantify degrees of possibility for those facts. In the quantum setting, by contrast, events are essentially tied to measurement operations and understood as the outcomes of specified measurements (``experimental propositions'' in the sense of Birkhoff and von Neumann \cite{Birkhoff_1936}); the relevant probabilities are the chances of those outcomes when the corresponding measurement is performed. In cognitive applications, the measurement is typically instantiated by a question posed to a participant, and the event is the outcome of that question (see, e.g., \cite{PothosBusemeyer_2013}). In light of this, the proposed classical DPM models the decision process independently of whether such a question is asked, whereas the quantum-like DPM models what happens when the question is asked.

With this in view, we read the opponent-choice events $\Dop$ and $\Cop$ as the two possible answers to the question ``Do you think the opponent will defect?''—a ``defect'' answer instantiates $\Dop$ and a ``cooperate'' answer instantiates $\Cop$. As shown in Section \ref{subsec:quantum_review}, equation \eqref{eq:q_norm_DC} holds for any state $\psi$. This equation describes the binary nature of the question: one of the two answers obtains when the measurement is made. Although this identity is formally the same as equation \eqref{eq:sum1_DC} in the conventional classical DPM, its interpretation differs in the quantum setting because it is grounded in measurement outcomes rather than measurement-independent facts.

Crucially, the quantum-like DPM does not assume that participants hold a fixed, settled prediction about the opponent prior to being asked. In the experiment, participants are not actually asked to state an expectation about the opponent's move; it is therefore natural to model their pre-decision cognitive state as a superposition $\psi=\psi_{\mathrm D}+\psi_{\mathrm C}$ relative to the ``defect/cooperate'' decomposition. This superposed state encodes an ambiguous expectation that is resolved only upon the corresponding measurement.

\subsection{Comparing the quantum-like and proposed classical models}

We are now in a position to compare the quantum-like DPM with the proposed classical DPM, highlighting both commonalities and differences. Both frameworks allow decisions to be made while the participant's expectation about the opponent's move remains ambiguous. In other words, the pre-decision state need not collapse to a determinate ``defect'' or ``cooperate'' expectation. This is the key departure from the conventional classical DPM. 

The central difference rests on how each framework defines events and assigns probabilities, as outlined in Section \ref{subsec:quantum_DPM}. In the proposed classical DPM, events are treated as measurement-independent facts, and the normalization identity \eqref{eq:sum1_P} expresses a partition of the participant pool across expectation levels. In the quantum-like DPM, events are outcomes of a specified measurement, and the normalization identity \eqref{eq:q_norm_DC} reflects the fact that a binary question yields exactly one of two outcomes when posed. 
Notably, the normalization identities make explicit—at the mathematical level—the conceptual differences in what counts as an event and how probabilities are understood.

With these interpretations in place, the models also diverge in how they formalize ambiguity. The proposed classical DPM captures heterogeneity between participants by means of a distribution over a continuous expectation parameter $\bP\in[0,1]$. This strategy explicitly includes participants with expectation parameters strictly between 0 and 1, thereby modeling intermediate expectations. By contrast, the quantum-like DPM represents ambiguity within a single cognitive state as a superposition over the two expectation subspaces (``defect'' vs.\ ``cooperate'') and includes the interference term in \eqref{eq:quantum_LTP}. In principle, this allows one to fit the aggregate disjunction-effect pattern without invoking heterogeneity across participants \cite{Wang_2013} (although mixed-state extensions that reintroduce such heterogeneity have also been suggested in the literature \cite{Khrennikov_2009, Blutner_2016}). 

In short, the two frameworks do not just use different mathematics; they model different notions of event and probability, and this drives their distinct treatments of ambiguity. For a compact side-by-side summary, see Table \ref{tab:model_comparison_compact}.

Some studies have compared the conventional classical DPM with the quantum-like DPM through the lens of formal event representation \cite{Widdows_2021,Widdows_2023}, and this perspective is also helpful for a comparison that includes our proposed model. Section~\ref{sec:semantics_comparison} provides a complementary, mathematically oriented comparison of the three models based on formal event semantics (set-theoretic semantics vs.\ Hilbert-space semantics).

\begin{table}[H]
\centering
\footnotesize 
\renewcommand{\arraystretch}{1.8} 
\caption{Comparison of three decision-process models (DPMs): the conventional classical DPM, the proposed classical DPM, and the quantum-like DPM.}

\newcolumntype{L}{>{\raggedright\arraybackslash}X}
\begin{tabularx}{\textwidth}{@{} >{\bfseries\raggedright}p{0.18\textwidth} L L L @{}}
\toprule
& \textbf{Conventional classical DPM} & \textbf{Proposed classical DPM} & \textbf{Quantum-like DPM} \\
\midrule 

Assumption
& $\Dop \land \Cop=\varnothing$ \par $\Dop \lor \Cop=\top$ \par \eqref{eq:empty_DC}, \eqref{eq:sure_DC}
& $[\bP\in I_k]\land[\bP\in I_\ell]=\varnothing$ \par $(\text{for} \;k\neq \ell)$ \par $\displaystyle\bLor_{k=1}^{K}[\bP\in I_k]=\top$ \par \eqref{eq:empty_P}, \eqref{eq:sure_P}
& $\mathscr H_{\mathrm D}\!\perp\!\mathscr H_{\mathrm C}$ \par $\mathscr H_{\mathrm D}\!\oplus\!\mathscr H_{\mathrm C}=\mathscr H$ \par \eqref{eq:subspace_perp}, \eqref{eq:subspace_direct_sum} \\
\midrule 

Treatment of participants' expectations
& Certainty-only \par (defect or cooperate)
& Allow mixed expectations via $\bP\!\in\![0,1]$
& Allow mixed expectations via superposition \\
\midrule 

Law of total probability
& \footnotesize
$\Pr(\Dpa)=
\begin{aligned}[t]
  &\Pr(\Dpa\mid\Dop)\Pr(\Dop)\\
  &+\Pr(\Dpa\mid\Cop)\Pr(\Cop)
\end{aligned}$ \par \eqref{eq:LTP_DC}
& \footnotesize
$\Pr(\Dpa)=
\begin{aligned}[t]
  &\textstyle\sum_{k=1}^{K} \Pr(\Dpa\mid\bP\!\in\! I_k)\\
  &\cdot \Pr(\bP\!\in\! I_k)
\end{aligned}$ \par \eqref{eq:LTP_P}
& \footnotesize
$\Pr(\Dpa)=
\begin{aligned}[t]
  &\Pr(\Dpa\mid\Dop)\Pr(\Dop)\\
  &+\Pr(\Dpa\mid\Cop)\Pr(\Cop)\\
  &+\Delta
\end{aligned}$ \par \eqref{eq:quantum_LTP}\\

Compatibility with the result of the PD experiment
& No (conflict)
& Yes (interior terms $\Pr(\Dpa\mid\bP\!\in\! I_k)$ not fixed by data)
& Yes (interference term accommodates observed pattern) \\
\midrule 

Normalization
& $\Pr(\Dop)+\Pr(\Cop)=1$ \par \eqref{eq:sum1_DC}
& $\displaystyle \sum_{k=1}^{K}\Pr(\bP\in I_k)=1$ \par \eqref{eq:sum1_P}
& $\Pr(\Dop)+\Pr(\Cop)=1 $ \par \eqref{eq:q_norm_DC} \\
\bottomrule
\end{tabularx}

\label{tab:model_comparison_compact}
\end{table}

\newpage
\section{Mathematical Results}\label{mathresults}

Here we prove two mathematical results to support the above arguments. The first result is the following Theorem~\ref{thm:disjunction_realization}, which makes precise, in a fully Kolmogorovian setting, the above claim that the proposed classical decision–process model can reproduce the empirical disjunction-effect pattern in the Prisoner's Dilemma while strictly obeying the classical law of total probability. Note that, given a partition $0 = a_0 < a_1 < \cdots < a_K = 1$ and $I_k = [a_{k-1}, a_k)$, we refer to each interval $I_k$ as a bin (of expectation levels).

\begin{thm}[Classical realization of a strong disjunction triple]
\label{thm:disjunction_realization}
Let 
\[
d_U,d_D,d_C \in (0,1),
\] 
interpreted as the observed defection
rates in the Prisoner's Dilemma experiment under the ``unknown'', ``told
defect'', and ``told cooperate'' conditions, respectively. Then there exists
a three-bin instance of the classical decision-process model of
Section~\ref{section:classic}, with expectation parameter $\bP\in[0,1]$,
partition $\{I_1,I_2,I_3\}$ of $[0,1]$, bin weights
\[
\alpha_k := \Pr(\bP\in I_k) \quad (k=1,2,3),
\]
and bin-wise defection probabilities
\[
q_k := \Pr(\Dpa \mid \bP\in I_k) \quad (k=1,2,3),
\]
such that:
\begin{enumerate}
  \item $\alpha_k>0$ for all $k$ and $\sum_{k=1}^3 \alpha_k = 1$;
  \item the extreme bins correspond to the ``told cooperate'' and ``told
        defect'' conditions,
        \[
          q_1 = d_C = \Pr(\Dpa\mid\Cop),\qquad
          q_3 = d_D = \Pr(\Dpa\mid\Dop),
        \]
        where we identify $\Cop := [\bP\in I_1]$ and $\Dop := [\bP\in I_3]$;
  \item the ``unknown'' condition satisfies the classical law of total
        probability in the three-bin form
        \[
          \Pr(\Dpa) \;=\; \sum_{k=1}^3 \Pr(\Dpa\mid \bP\in I_k)\,
                                   \Pr(\bP\in I_k)
                       \;=\; \sum_{k=1}^3 q_k \alpha_k
                       \;=\; d_U.
        \]
\end{enumerate}
In particular, such a model exists even when $d_U < \min\{d_D,d_C\}$, i.e.,
in the strong disjunction-effect case.
\end{thm}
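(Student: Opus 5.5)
The plan is an explicit construction: fix $q_1=d_C$ and $q_3=d_D$ as item~2 requires, and use the middle bin $I_2$ as a free parameter that absorbs any mismatch between $d_U$ and a mixture of $d_C$ and $d_D$. Concretely, take any partition $0=a_0<a_1<a_2<a_3=1$, say $a_1=1/3$ and $a_2=2/3$. Let $\bP$ be a random variable on a probability space whose law puts mass $\alpha_k$ on $I_k$ (for instance, uniform within each bin). Let $\Dpa$ be an event with $\Pr(\Dpa\mid\bP\in I_k)=q_k$; concretely, take $\Omega=[0,1]\times[0,1]$ with $\bP$ the first coordinate and $\Dpa=\{(p,u): u<q(p)\}$, where $q$ is constant equal to $q_k$ on $I_k$. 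This is a genuine Kolmogorovian model in which the bins partition $\Omega$, so \eqref{eq:empty_P} and \eqref{eq:sure_P} hold and \eqref{eq:LTP_P} is automatic. Thus only the numbers $\alpha_k\in(0,1)$ and $q_2\in[0,1]$ remain to be chosen.

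For these, require $\alpha_1d_C+\alpha_2q_2+\alpha_3d_D=d_U$ with all $\alpha_k>0$. Take $\alpha_1=\alpha_3=\varepsilon$ and $\alpha_2=1-2\varepsilon$, and solve
\[
q_2=\frac{d_U-\varepsilon(d_C+d_D)}{1-2\varepsilon}.
\]
As $\varepsilon\to0^+$ this tends to $d_U\in(0,1)$, so by continuity some small $\varepsilon\in(0,1/2)$ gives $q_2\in(0,1)$. An explicit choice also works: $\varepsilon=\tfrac12\min\{d_U,1-d_U\}$. With $s=d_C+d_D\in(0,2)$, the numerator satisfies $d_U-\varepsilon s>d_U-2\varepsilon\ge0$. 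The condition $q_2<1$ is equivalent to $d_U-\varepsilon s<1-2\varepsilon$, i.e.\ $\varepsilon(2-s)<1-d_U$, which holds because $2-s<2$ and $2\varepsilon\le1-d_U$. Then items 1--3 follow directly.

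The strong disjunction case $d_U<\min\{d_D,d_C\}$ needs no separate argument, since the construction never compares $d_U$ with $d_C$ or $d_D$. The key point is that $q_2$ is unconstrained by the data and may lie below both extremes. The only genuine obstacle is keeping $q_2$ inside $[0,1]$ while all weights stay strictly positive, and the small-$\varepsilon$ choice with the explicit bound handles it.

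A minor secondary step is to check that the identifications $\Cop:=[\bP\in I_1]$ and $\Dop:=[\bP\in I_3]$ give well-defined conditional probabilities. This holds because $\alpha_1,\alpha_3>0$.
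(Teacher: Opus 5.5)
Your proposal is correct and takes essentially the same route as the paper: weights $\alpha_1=\alpha_3=\varepsilon$, $\alpha_2=1-2\varepsilon$, extreme bins $q_1=d_C$, $q_3=d_D$, and the same formula for $q_2$. The differences are minor: you take the boundary value $\varepsilon=\tfrac12\min\{d_U,1-d_U\}$, which works only because $d_C+d_D\in(0,2)$ strictly, whereas the paper picks $t$ strictly below that bound and needs only $d_C,d_D\in[0,1]$; and you use a continuous product space instead of the paper's finite space $\{1,2,3\}\times\{\mathrm{D},\mathrm{C}\}$. One cosmetic point: since $I_3=[a_2,1)$ excludes $1$, your bins do not literally partition $\Omega=[0,1]\times[0,1]$; they miss the null set $\{1\}\times[0,1]$, so take $\Omega=[0,1)\times[0,1]$ (or close the last bin) to make \eqref{eq:sure_P} hold exactly.
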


\begin{proof}
Fix $d_U,d_D,d_C\in(0,1)$. Take $K=3$ and choose any partition
\[
0 = a_0 < a_1 < a_2 < a_3 = 1,\qquad
I_k = [a_{k-1},a_k) \quad (k=1,2,3).
\]
We interpret $I_1$ and $I_3$ as high-cooperation and high-defection
expectation regions, respectively, and $I_2$ as an intermediate region, as in
Figure~\ref{fig:DPM}(b).

\medskip
\noindent\textit{Step 1: bin probabilities.}
Pick $t>0$ such that
\[
  0 < t <
  \min\Bigl\{\frac{1}{2},\frac{d_U}{2},\frac{1-d_U}{2}\Bigr\}.
\]
Define
\[
  \alpha_1 := t,\qquad
  \alpha_3 := t,\qquad
  \alpha_2 := 1-2t.
\]
Then $\alpha_k>0$ for all $k$ and $\sum_{k=1}^3 \alpha_k=1$.

\medskip
\noindent\textit{Step 2: extreme bins match the known conditions.}
Set
\[
  q_1 := d_C,\qquad q_3 := d_D.
\]
If we identify $\Cop := [\bP\in I_1]$ and $\Dop := [\bP\in I_3]$, then
\[
  \Pr(\Dpa\mid\Cop) = q_1 = d_C,\qquad
  \Pr(\Dpa\mid\Dop) = q_3 = d_D,
\]
so the ``told cooperate'' and ``told defect'' rates are matched exactly in the
three-bin model.

\medskip
\noindent\textit{Step 3: solving for the interior bin.}
We now choose $q_2$ so that the unconditional defection rate in the
``unknown'' condition equals $d_U$. Under \eqref{eq:LTP_P} for $K=3$ we
require
\[
  d_U
  = \Pr(\Dpa)
  = \sum_{k=1}^3 q_k \alpha_k
  = t\,d_C + (1-2t)\,q_2 + t\,d_D.
\]
Solving for $q_2$ gives
\begin{equation}\label{eq:q2_formula}
  q_2
  = \frac{d_U - t(d_C + d_D)}{1-2t}.
\end{equation}
The denominator is strictly positive because $t<1/2$. We claim that
$0<q_2<1$.

For the lower bound, use $d_C,d_D\le 1$ to obtain
\[
  t(d_C+d_D) \le 2t
  \quad\Rightarrow\quad
  d_U - t(d_C+d_D) \ge d_U - 2t > 0,
\]
where the last inequality is $t<d_U/2$. Dividing by $1-2t>0$ yields
$q_2>0$.

For the upper bound, $d_C,d_D\ge 0$ implies
\[
  d_U - t(d_C+d_D) \le d_U.
\]
Since $t<(1-d_U)/2$, we have $d_U < 1-2t$, so
\[
  d_U - t(d_C+d_D) < 1-2t,
\]
and hence $q_2<1$ after division by $1-2t$. Thus $q_2\in(0,1)$.

\medskip
\noindent\textit{Step 4: explicit probability space.}
Define\footnote{Note that the first coordinate is the expectation bin, the second is the player's action.}
\[
  \Omega := \{1,2,3\}\times\{\mathrm{D},\mathrm{C}\},
\]
and a probability measure $\Pr$ on singletons by
\[
  \Pr\bigl((k,\mathrm{D})\bigr) := \alpha_k q_k,\qquad
  \Pr\bigl((k,\mathrm{C})\bigr) := \alpha_k (1-q_k),
  \quad k=1,2,3,
\]
extending to all subsets of $\Omega$ by finite additivity. This is a
probability measure, since
\[
  \sum_{k=1}^3 \Bigl[\Pr\bigl((k,\mathrm{D})\bigr)
                     + \Pr\bigl((k,\mathrm{C})\bigr)\Bigr]
  = \sum_{k=1}^3 \alpha_k q_k + \alpha_k(1-q_k)
  = \sum_{k=1}^3 \alpha_k
  = 1.
\]

Define the expectation parameter $\bP\colon\Omega\to[0,1]$ by setting
$\bP(\omega)\in I_k$ whenever the first coordinate of $\omega$ equals $k$.
Then $[\bP\in I_k]=\{k\}\times\{\mathrm{D},\mathrm{C}\}$ and
\[
  \Pr(\bP\in I_k) = \alpha_k,\qquad
  \Pr(\Dpa\mid\bP\in I_k) = q_k,\quad k=1,2,3.
\]
Let
\[
  \Dpa := \{(k,\mathrm{D}):k=1,2,3\},\qquad
  \Cop := [\bP\in I_1],\qquad
  \Dop := [\bP\in I_3].
\]
Then, by construction,
\[
  \Pr(\Dpa\mid\Cop) = q_1 = d_C,\qquad
  \Pr(\Dpa\mid\Dop) = q_3 = d_D,
\]
and, using \eqref{eq:LTP_P} in the three-bin form,
\[
  \Pr(\Dpa)
  = \sum_{k=1}^3 \Pr(\Dpa\mid\bP\in I_k)\,\Pr(\bP\in I_k)
  = \sum_{k=1}^3 q_k \alpha_k
  = d_U.
\]
Thus the triple $(d_U,d_D,d_C)$ is realized by a three-bin instance of the
proposed classical DPM that satisfies the classical law of total probability
\eqref{eq:LTP_P}, including the strong disjunction-effect case
$d_U<\min\{d_D,d_C\}$. 
\end{proof}

Thus the theorem shows that the strong inequality $d_U < \min\{d_D,d_C\}$ is \emph{not} in conflict with classical Kolmogorov probability or with the law of total probability. What fails is only the overly restrictive ``certainty-only'' classical DPM, which collapses all participants into two
extreme mental states and thereby forbids any intermediate expectation levels. It demonstrates that once ambiguous expectations are explicitly modeled (via a latent expectation parameter $\bP$ and at least one interior bin), a classical model that respects the LTP can always be found that matches the data exactly.

Conceptually, this theorem has two important consequences for the wider discussion:
\begin{enumerate}
\item It undercuts the common inference from the PD data to a violation of the classical LTP: the conflict is traced to a modeling choice (the certainty-only premise), not to classical probability itself.
\item It shows that a very low-complexity classical model (a single latent continuous expectation parameter with only three bins) already has enough expressive power to capture the observed pattern, without resorting to elaborate latent-variable networks or Markov dynamics.
\end{enumerate}

The following Theorem~\ref{cor:quantum_disjunction_realization} lifts
Theorem~\ref{thm:disjunction_realization} to a direct comparison between the classical and quantum-like decision–process models.

\begin{thm}[Classical realization of quantum disjunction probabilities]
\label{cor:quantum_disjunction_realization}
Consider the quantum-like decision-process model for the Prisoner's Dilemma
disjunction-effect experiment described in Section~\ref{section:quantum}, with Hilbert space
$\mathscr H$, unit state vector $\psi\in\mathscr H$, opponent-choice subspaces
$\mathscr{H}_\mathrm{D},\mathscr{H}_\mathrm{C} \subseteq\mathscr H$ with corresponding orthogonal projectors
$\hat{\mathcal{P}}_{\mathrm{D}},\hat {\mathcal{P}}_{\mathrm{C}}$, and a projector $\hat {\mathcal{Q}}$ representing the choice event
$\Dpa$.

Let $\Pr_{\mathrm{qm}}$ denote the probability assignment defined from
$\psi$ by the Born rule and Lüders conditionalization in \eqref{eq:Born_rule} and \eqref{eq:Luder_rule}, and
define the three defection rates
\[
  d_U := \Pr_{\mathrm{qm}}(\Dpa),\qquad
  d_D := \Pr_{\mathrm{qm}}(\Dpa \mid \Dop),\qquad
  d_C := \Pr_{\mathrm{qm}}(\Dpa \mid \Cop),
\]
where $\Dop$ and $\Cop$ are the opponent-defect and opponent-cooperate
events represented by $\mathscr{H}_\mathrm{D}$ and $\mathscr{H}_\mathrm{C}$, respectively.
Assume  $\Pr_{\mathrm{qm}}(\Dop)>0$, $\Pr_{\mathrm{qm}}(\Cop)>0$ so that the
conditional probabilities are well-defined, and that
\[
  d_U,d_D,d_C \in (0,1).
\]

Then there exists a three-bin instance of the classical decision-process model
of Section~\ref{section:classic}, with expectation parameter $\bP\in[0,1]$, a
partition $\{I_1,I_2,I_3\}$ of $[0,1]$, bin weights
\[
  \alpha_k := \Pr_{\mathrm{cl}}(\bP\in I_k) \quad (k=1,2,3),
\]
and bin-wise defection probabilities
\[
  q_k := \Pr_{\mathrm{cl}}(\Dpa \mid \bP\in I_k) \quad (k=1,2,3),
\]
for some classical probability measure $\Pr_{\mathrm{cl}}$, such that, with the
identifications
\[
  \Cop := [\bP\in I_1],\qquad
  \Dop := [\bP\in I_3],
\]
the three classical defection rates coincide with the quantum ones:
\[
  \Pr_{\mathrm{cl}}(\Dpa) = d_U,\qquad
  \Pr_{\mathrm{cl}}(\Dpa \mid \Dop) = d_D,\qquad
  \Pr_{\mathrm{cl}}(\Dpa \mid \Cop) = d_C.
\]
In particular, any strong disjunction-effect pattern
$d_U < \min\{d_D,d_C\}$ produced by the quantum-like model is also realized
by a three-bin instance of the classical decision-process model.
\end{thm}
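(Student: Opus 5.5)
The plan is to reduce the statement directly to Theorem~\ref{thm:disjunction_realization}. That theorem needs only a triple $(d_U,d_D,d_C)\in(0,1)^3$ and builds a three-bin classical instance realizing it. So the work here is to check that the quantum-like model produces a well-defined triple of this kind; the classical realization then comes for free.

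First, I would verify that the three quantum rates are well-defined real numbers in $(0,1)$.
\begin{itemize}
\item By the Born rule \eqref{eq:Born_rule}, $d_U=\|\hat{\mathcal Q}\psi\|^2$ is a real number in $[0,1]$, since $\hat{\mathcal Q}$ is an orthogonal projector and $\psi$ is a unit vector.
\item Because $\Pr_{\mathrm{qm}}(\Dop)=\|\hat{\mathcal P}_{\mathrm D}\psi\|^2>0$ and $\Pr_{\mathrm{qm}}(\Cop)>0$ by hypothesis, the Lüders quotients \eqref{eq:Luder_rule} defining $d_D$ and $d_C$ are well-defined.
\item Each quotient lies in $[0,1]$, because $\|\hat{\mathcal Q}\hat{\mathcal P}_{\mathrm D}\psi\|\le\|\hat{\mathcal P}_{\mathrm D}\psi\|$, and similarly for $\mathrm C$.
\end{itemize}
The assumption $d_U,d_D,d_C\in(0,1)$ then puts the triple exactly in the hypothesis class of Theorem~\ref{thm:disjunction_realization}. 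Note that no relation among the three numbers is needed. In particular, the interference term $\Delta$ in \eqref{eq:quantum_LTP} can be arbitrary and plays no role. This is exactly why the reduction works: Theorem~\ref{thm:disjunction_realization} imposes no LTP-type constraint tying $d_U$ to $d_D$ and $d_C$.

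Second, I would apply Theorem~\ref{thm:disjunction_realization} to this triple. It yields a partition $\{I_1,I_2,I_3\}$, weights $\alpha_k>0$ summing to one, and bin probabilities $q_1=d_C$, $q_3=d_D$, with $q_2$ given by \eqref{eq:q2_formula}. It also supplies the explicit finite space $\Omega=\{1,2,3\}\times\{\mathrm D,\mathrm C\}$ with measure $\Pr_{\mathrm{cl}}$ from Step~4 of that proof. Under the identifications $\Cop:=[\bP\in I_1]$ and $\Dop:=[\bP\in I_3]$, that theorem gives
\begin{itemize}
\item $\Pr_{\mathrm{cl}}(\Dpa\mid\Cop)=d_C$,
\item $\Pr_{\mathrm{cl}}(\Dpa\mid\Dop)=d_D$,
\item $\Pr_{\mathrm{cl}}(\Dpa)=d_U$, via \eqref{eq:LTP_P}.
\end{itemize}
These are the required equalities. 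The "in particular" clause follows at once: a strong disjunction pattern $d_U<\min\{d_D,d_C\}$ from the quantum model is just a special case of the triple already handled.

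The only step needing care is the bookkeeping that separates the two probability assignments. The symbols $\Dop$, $\Cop$ and $\Dpa$ mean subspaces or projectors on the quantum side and subsets of $\Omega$ on the classical side. The theorem asserts only equality of the three numerical rates, not any structural correspondence between the events. I would state this explicitly, so that reusing the notation does not suggest a homomorphism between the Hilbert-space and set-theoretic event structures. Beyond this, I expect no real obstacle.
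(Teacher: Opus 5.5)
Your proposal is correct and matches the paper's proof: the paper also applies Theorem~\ref{thm:disjunction_realization} directly to the quantum triple $(d_U,d_D,d_C)\in(0,1)^3$ and reads off the three equalities from $q_1=d_C$, $q_3=d_D$ and the three-bin law of total probability \eqref{eq:LTP_P}. Your preliminary checks that the Born and L\"uders values lie in $[0,1]$ are redundant given the hypothesis $d_U,d_D,d_C\in(0,1)$, but they do no harm, and neither does your remark separating the quantum and classical event notation.
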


\begin{proof}
By construction, the quantum-like model assigns well-defined numbers
$d_U,d_D,d_C\in(0,1)$ via
\[
  d_U = \Pr_{\mathrm{qm}}(\Dpa),\qquad
  d_D = \Pr_{\mathrm{qm}}(\Dpa \mid \Dop),\qquad
  d_C = \Pr_{\mathrm{qm}}(\Dpa \mid \Cop),
\]
using the Born rule and Lüders conditionalization in \eqref{eq:Born_rule} and \eqref{eq:Luder_rule}.

Apply Theorem~\ref{thm:disjunction_realization} to the triple
$(d_U,d_D,d_C)$. The theorem yields a three-bin instance of the classical
decision-process model of Section~\ref{section:classic}, with partition
$\{I_1,I_2,I_3\}$ of $[0,1]$, bin weights
$\alpha_k := \Pr_{\mathrm{cl}}(\bP\in I_k)>0$ satisfying
$\sum_{k=1}^3 \alpha_k = 1$, and bin-wise defection probabilities
$q_k := \Pr_{\mathrm{cl}}(\Dpa \mid \bP\in I_k)\in(0,1)$ such that
\[
  q_1 = d_C,\qquad
  q_3 = d_D,\qquad
  \sum_{k=1}^3 q_k \alpha_k = d_U,
\]
and where we identify
\[
  \Cop := [\bP\in I_1],\qquad
  \Dop := [\bP\in I_3].
\]

By the definition of $q_1$ and $q_3$ we have
\[
  \Pr_{\mathrm{cl}}(\Dpa \mid \Cop)
  = \Pr_{\mathrm{cl}}(\Dpa \mid \bP\in I_1)
  = q_1
  = d_C,
\]
and similarly
\[
  \Pr_{\mathrm{cl}}(\Dpa \mid \Dop)
  = \Pr_{\mathrm{cl}}(\Dpa \mid \bP\in I_3)
  = q_3
  = d_D.
\]
Moreover, the three-bin law of total probability~\eqref{eq:LTP_P} in the
classical model gives
\[
  \Pr_{\mathrm{cl}}(\Dpa)
  = \sum_{k=1}^3
      \Pr_{\mathrm{cl}}(\Dpa \mid \bP\in I_k)\,
      \Pr_{\mathrm{cl}}(\bP\in I_k)
  = \sum_{k=1}^3 q_k \alpha_k
  = d_U.
\]
Hence the classical model guaranteed by Theorem~\ref{thm:disjunction_realization}
reproduces exactly the three defection rates of the quantum-like model.
If, in addition, the quantum triple satisfies the strong disjunction-effect
inequality $d_U < \min\{d_D,d_C\}$, then the same inequality holds for the
corresponding probabilities in the classical model, since the two triples
coincide. This establishes the claim.
\end{proof}

Put another way, at the level of the three observable defection rates, the
quantum-like model has no greater empirical representational power than
the three-bin classical model: whatever triple the quantum model can produce
under its assumptions, there is a classical model (obeying the ordinary LTP)
that produces exactly the same triple. In particular, any strong
disjunction-effect pattern generated by the quantum-like model is also
generated by an appropriate classical model.

This has two main implications:
\begin{enumerate}
  \item The interference term $\Delta$ in the quantum LTP \eqref{eq:quantum_LTP} is not forced
        by the PD data themselves; the same aggregate pattern can be
        accounted for by a classical mixture over expectation levels. Quantum
        structure is therefore not required merely to fit the three
        observed rates.
  \item The real distinction between the proposed classical DPM and the
        quantum-like DPM lies in their event semantics and in where
        ambiguity is represented: heterogeneity across participants via the
        distribution of $\bP$ in the classical model versus superposition
        within a single cognitive state in the quantum model. The above theorem
        isolates this difference by showing that, once we strip away those
        semantic choices and look only at $(d_U,d_D,d_C)$, both models are
        equally capable.
\end{enumerate}

Taken together, Theorem~\ref{thm:disjunction_realization} and
Theorem~\ref{cor:quantum_disjunction_realization} therefore sharpen 
the present paper's central point: the disjunction-effect pattern in the 
Prisoner's Dilemma does not by itself refute classical probability, and 
quantum-like formalisms owe their success at least as much to how they 
represent ambiguous expectations as to any inherently non-classical 
probability structure.

\section{Event semantics and how ambiguity enters the three models}
\label{sec:semantics_comparison}

This section complements the argument in Section \ref{section:quantum} by comparing the event semantics used in the conventional classical DPM, the proposed classical DPM, and the quantum-like DPM. Our aim here is to clarify why the latter two models allow participants to hold ambiguous pre-decision expectations about the opponent's choice, whereas the conventional model precludes such ambiguity. For ease of comparison, we use the standard interpretation brackets $\llbracket\cdot\rrbracket$ and write $\llbracket E\rrbracket$ for the semantic denotation of an event $E$. In the classical readings, $\llbracket E\rrbracket$ is a subset of a sample space $\Omega$. In the quantum-like reading, $\llbracket E\rrbracket$ is a closed subspace of a Hilbert space $\mathscr H$. 
Conjunction ($\land$) is interpreted as intersection in both settings. Disjunction ($\lor$) is interpreted as set-theoretic union in the classical setting, but as the closed linear span in the quantum setting (the smallest closed subspace containing both event subspaces). Negation ($\lnot$) is interpreted as set complement classically, $\llbracket \lnot E \rrbracket =\Omega\setminus \llbracket E \rrbracket$, and as orthogonal complement in the quantum setting, $\llbracket \lnot E \rrbracket={\llbracket E \rrbracket}^{\perp}$ (in $\mathscr H$).

This Hilbert-space semantics goes back to the standard formulation of quantum logic by Birkhoff and von Neumann \cite{Birkhoff_1936}, has been introduced into cognitive modeling as a non-classical event semantics (see, e.g., \cite{Wang_2013, Khrennikov_2015}), and has recently been illustrated geometrically and applied to decision-making phenomena such as the disjunction effect by Widdows et al.\ \cite{Widdows_2021,Widdows_2023}.

\paragraph{Conventional DPM (classical, Boolean semantics)}
Opponent-choice events are interpreted as subsets of the sample space $\Omega$ (with $\llbracket \top\rrbracket=\Omega$):
$\llbracket \Dop \rrbracket=\Omega_{\mathrm D},\;
\llbracket \Cop \rrbracket=\Omega_{\mathrm C}$, where $\Omega_{\mathrm D}, \Omega_{\mathrm C}\subset \Omega$.
Assumptions \eqref{eq:empty_DC} and \eqref{eq:sure_DC} are interpreted as
\begin{equation}
\llbracket \Dop\rrbracket \cap \llbracket \Cop\rrbracket = \varnothing,
\qquad
\llbracket \Dop\rrbracket \cup \llbracket \Cop\rrbracket = \llbracket \top\rrbracket .
\end{equation}
Thus every element $\omega\in\Omega$ belongs to exactly one of the two events.
This leaves no room, at the level of event semantics, for participants who proceed with an ambiguous or mixed expectation about the opponent's action (cf.\ Section \ref{subsec:decision-process_model_review}).

\paragraph{Proposed DPM (classical, Boolean semantics with a refined partition)}
We retain the classical set-theoretic event semantics, but refine the description of pre-decision expectations by introducing a continuous expectation parameter $\bP\in[0,1]$. Let $\{I_k\}_{k=1}^K$ be a partition of $[0,1]$; this induces a partition of the sample space via the events $\llbracket \bP\in I_k\rrbracket$, namely
\begin{equation}
\llbracket \bP\in I_k \rrbracket \cap \llbracket \bP\in I_\ell \rrbracket \;=\; \varnothing \quad (k\neq \ell),
\qquad
\bigcup_{k=1}^K \llbracket \bP\in I_k \rrbracket \;=\; \llbracket \top\rrbracket,
\end{equation}
which are the semantic counterparts of assumptions \eqref{eq:empty_P} and \eqref{eq:sure_P}.
The two settled expectation events are represented by the extremes,
\begin{equation}
\llbracket \Dop\rrbracket := \llbracket \bP\in I_K \rrbracket, 
\qquad
\llbracket \Cop\rrbracket := \llbracket \bP\in I_1 \rrbracket .
\end{equation}
Because interior bins are admitted, we have
\begin{equation}
\llbracket \Dop\rrbracket \cup \llbracket \Cop\rrbracket
\;=\; \llbracket \bP\in I_1 \rrbracket \cup \llbracket \bP\in I_K \rrbracket
\;\subsetneq\;
\bigcup_{k=1}^K \llbracket \bP\in I_k \rrbracket
\;=\; \llbracket \top\rrbracket .
\end{equation}
Thus, ambiguity about the opponent's move corresponds to those sample points that lie outside both certainty sets,
\begin{equation}
\llbracket \top\rrbracket \setminus \bigl(\llbracket \Dop\rrbracket \cup \llbracket \Cop\rrbracket\bigr),
\end{equation}
namely participants whose expectation parameter falls in interior levels rather than at the two extremes.

\paragraph{Quantum-like DPM (Hilbert-space semantics)}
In the classical Boolean logic, the two assumptions used in the conventional DPM, \eqref{eq:empty_DC} and \eqref{eq:sure_DC}, are equivalent to the single syntactic identity
\begin{equation}
\Dop \equiv \lnot \Cop .
\end{equation}
The quantum-like model adopts exactly the same syntax. What changes is the semantics. In Hilbert-space semantics, events are represented by closed subspaces of $\mathscr H$: $\llbracket \Dop \rrbracket = \mathscr H_{\mathrm D} \subset \mathscr{H}, \; \llbracket \Cop \rrbracket = \mathscr H_{\mathrm C} \subset \mathscr{H}$, with $\llbracket \top \rrbracket = \mathscr{H}$.
Under this reading, the identity above is interpreted as
\begin{equation}
\llbracket \Dop \rrbracket \;=\; \llbracket \Cop \rrbracket^{\perp},
\end{equation}
which is equivalent to the pair of subspace conditions
\begin{equation}
\llbracket \Dop \rrbracket \perp \llbracket \Cop \rrbracket,
\qquad
\llbracket \Dop \rrbracket \oplus \llbracket \Cop \rrbracket \;=\; \llbracket \top \rrbracket .
\end{equation}
These are precisely \eqref{eq:subspace_perp} and \eqref{eq:subspace_direct_sum}; the second condition is the Hilbert-space semantic counterpart of the syntactic disjunction $\Dop \lor \Cop \equiv \top$ and thus provides the quantum analogue of exhaustivity.
Nevertheless, the set-theoretic union of the two subspaces is, in general,
\begin{equation}
\llbracket \Dop \rrbracket \cup \llbracket \Cop \rrbracket
= \mathscr H_{\mathrm D} \cup \mathscr H_{\mathrm C}
\subsetneq \mathscr H = \llbracket \top \rrbracket .
\end{equation}
Vectors of the form $\psi = \psi_{\mathrm D} + \psi_{\mathrm C}$ with $\psi_{\mathrm D} \in \mathscr H_{\mathrm D}$ and $\psi_{\mathrm C} \in \mathscr H_{\mathrm C}$ lie in the closed span $\overline{\mathscr H_{\mathrm D} + \mathscr H_{\mathrm C}}$ but in neither $\mathscr H_{\mathrm D}$ nor $\mathscr H_{\mathrm C}$ whenever both components are nonzero.

This illustrates the characteristic ``in-between'' states of quantum logic discussed in geometric terms by Widdows et al. \cite{Widdows_2021}, which they later use to analyze the quantum disjunction underlying the disjunction effect (Widdows et al. \cite{Widdows_2023}).

In this way, ambiguity is represented within a single state: $\psi \notin \llbracket \Dop \rrbracket \cup \llbracket \Cop \rrbracket$ even though $\psi \in \llbracket \Dop \lor \Cop \rrbracket = \llbracket \top \rrbracket$.

\paragraph{Common departure and different reasons}
Both the proposed classical DPM and the quantum-like DPM depart from the conventional cover condition
\begin{equation}
\llbracket \Dop\rrbracket \cup \llbracket \Cop\rrbracket \;=\; \llbracket \top\rrbracket
\end{equation}
in order to accommodate ambiguity:
\begin{align*}
\text{Proposed classical DPM:}& \ \ \llbracket \Dop\rrbracket \cup \llbracket \Cop\rrbracket \subsetneq \llbracket \top\rrbracket 
\ \ \text{via interior expectation bins},
\\
\text{Quantum-like DPM:}& \ \ \llbracket \Dop\rrbracket \cup \llbracket \Cop\rrbracket \subsetneq \llbracket \top\rrbracket 
\ \ \text{via superposition states}.
\end{align*}
They share the same displayed inequality but for different semantic reasons: the proposed model keeps classical set semantics and enriches the event vocabulary (adding interior expectation events), whereas the quantum-like model changes the semantic notion of disjunction from union to closed span, thereby allowing states that belong to neither sharp alternative. Figure \ref{fig:venn_diagram} illustrates these contrasts and makes explicit that the failure of the cover condition arises from distinct mechanisms in the two models.

\begin{figure}[H]
    \centering
    \includegraphics[width=1.0\linewidth]{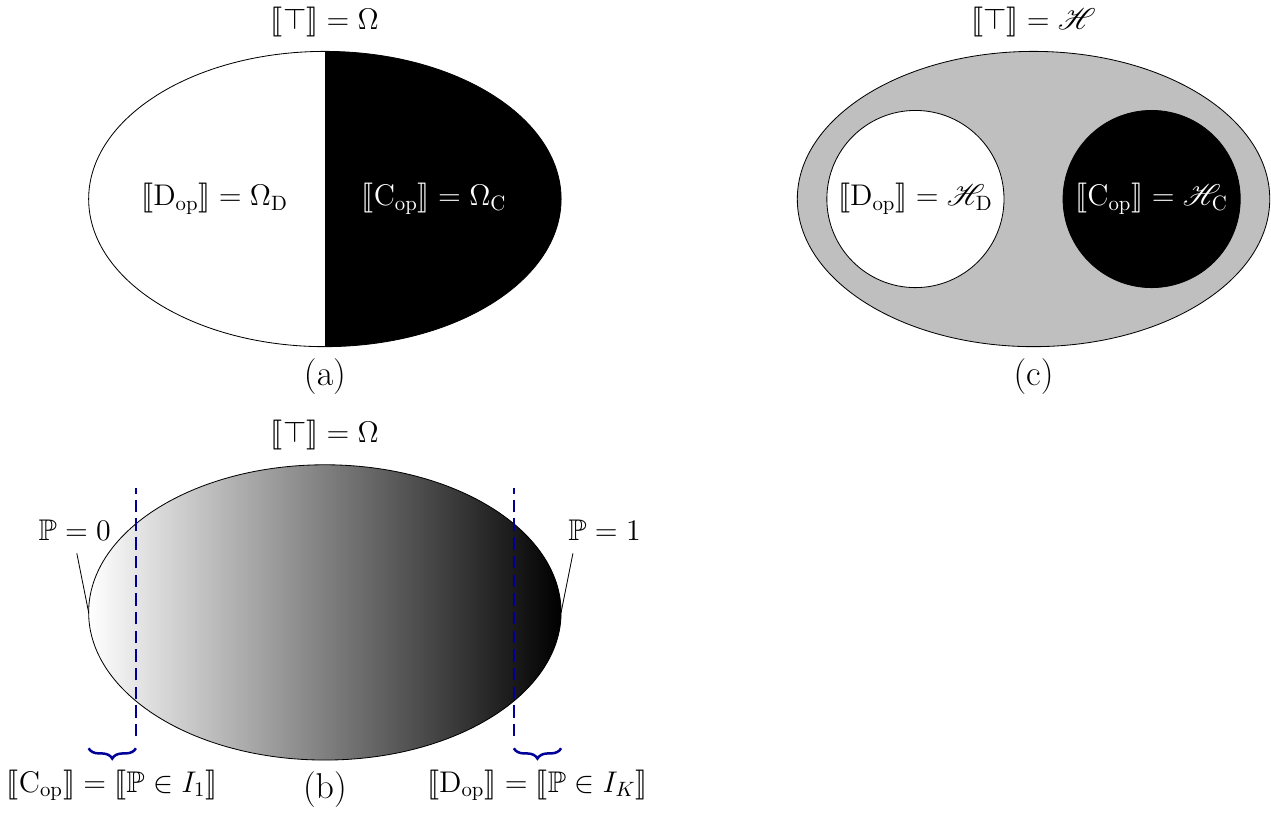}
    \caption{Event semantics across the three decision-process models. (a) Conventional classical (Boolean) reading: the certainty events exhaust the space, $\llbracket \Dop\rrbracket\cup\llbracket \Cop\rrbracket=\llbracket\top\rrbracket=\Omega$, so ambiguity has no semantic place. (b) Proposed classical reading: we keep set semantics but introduce an expectation parameter $\bP\in[0,1]$; the certainty events are the extremes, $\llbracket \Cop\rrbracket=\llbracket \bP\in I_1\rrbracket$ and $\llbracket \Dop\rrbracket=\llbracket \bP\in I_K\rrbracket$, while interior levels account for $\llbracket\top\rrbracket\setminus(\llbracket \Dop\rrbracket\cup\llbracket \Cop\rrbracket)$. (c) Quantum-like reading: events are closed subspaces $\mathscr H_{\mathrm D},\mathscr H_{\mathrm C}\subset\mathscr H$ with $\mathscr H_{\mathrm D}\oplus\mathscr H_{\mathrm C}=\llbracket\top\rrbracket=\mathscr H$; superposition states lie in $\mathscr H$ but in neither subspace, so $\llbracket \Dop\rrbracket\cup\llbracket \Cop\rrbracket\subsetneq\llbracket\top\rrbracket$. } 
    \label{fig:venn_diagram}
\end{figure}

We finally prove the following theorem to support the above arguments.

\begin{thm}[Ambiguity sets in the three decision-process models]\label{thm:ambiguity-sets}
Given opponent-choice events $\Dop$ and $\Cop$ and the sure event $\top$, define
the associated \emph{ambiguity set} by
\[
A \;:=\; \llbracket \top\rrbracket \setminus
\bigl(\llbracket \Dop\rrbracket \cup \llbracket \Cop\rrbracket\bigr).
\]

\begin{enumerate}[(i)]
\item \textbf{Conventional DPM.}
In the conventional model, assume that the opponent-choice events satisfy
\begin{equation}\label{eq:conv-partition}
\llbracket \Dop\rrbracket \cap \llbracket \Cop\rrbracket = \varnothing,
\qquad
\llbracket \Dop\rrbracket \cup \llbracket \Cop\rrbracket
= \llbracket \top\rrbracket.
\end{equation}
Then $A = \varnothing$.

\item \textbf{Proposed DPM with expectation parameter.}
In the proposed model, let $\bP\in[0,1]$ be the expectation parameter and
$\{I_k\}_{k=1}^K$ a partition of $[0,1]$. The induced events
$\llbracket \mathbb{P}\in I_k\rrbracket$ satisfy
\begin{equation}\label{eq:bins-partition}
\llbracket \mathbb{P}\in I_k\rrbracket \cap \llbracket \mathbb{P}\in I_\ell\rrbracket = \varnothing
\quad (k\neq \ell),
\qquad
\bigcup_{k=1}^K \llbracket \mathbb{P}\in I_k\rrbracket = \llbracket \top\rrbracket.
\end{equation}
The two settled expectation events are represented by the extremes
\begin{equation}\label{eq:extremes}
\llbracket \Dop\rrbracket := \llbracket \mathbb{P}\in I_K\rrbracket,
\qquad
\llbracket \Cop\rrbracket := \llbracket \mathbb{P}\in I_1\rrbracket.
\end{equation}
Then
\begin{equation}\label{eq:A-classical-refined}
A
= \bigcup_{k=2}^{K-1} \llbracket \mathbb{P}\in I_k\rrbracket.
\end{equation}
In particular, $A=\varnothing$ whenever $K\le 2$, and for $K\ge 3$ one has
\[
A\neq \varnothing
\quad\Longleftrightarrow\quad
\exists\,k\in\{2,\dots,K-1\}\ \text{such that }\ \llbracket \bP \in I_k\rrbracket\neq \varnothing.
\]
Moreover, for any probability measure $\Pr$ on $\Omega$,
\[
\Pr(A) > 0
\quad\Longleftrightarrow\quad
\exists\,k\in\{2,\dots,K-1\}\ \text{such that }\
\Pr\bigl(\bP \in I_k\bigr) > 0.
\]

\item \textbf{Quantum-like DPM.}
In the quantum-like model, let $\mathscr{H}$ be a finite-dimensional complex Hilbert space with
$\dim \mathscr{H} = n\geq 2$, and let
\[
\llbracket \Dop\rrbracket =: \mathscr{H}_{\mathrm{D}} \subset \mathscr{H},
\qquad
\llbracket \Cop\rrbracket =: \mathscr{H}_{\mathrm{C}} \subset \mathscr{H}
\]
be nonzero closed subspaces with
\begin{equation}\label{eq:orth-direct-sum}
\mathscr{H}_{\mathrm{D}} \perp \mathscr{H}_{\mathrm{C}},
\qquad
\mathscr{H}_{\mathrm{D}} \oplus \mathscr{H}_{\mathrm{C}} = \mathscr{H}.
\end{equation}
Let
\[
S(\mathscr{H}) \;:=\; \{\psi\in \mathscr{H} : \|\psi\| = 1\}
\]
be the unit sphere, and let $\mu$ denote the (unique) unitarily invariant probability
measure on $S(\mathscr{H})$. Define the \emph{state-level ambiguity set}
\[
A_S
\;:=\;
S(\mathscr{H})\setminus \bigl( (\mathscr{H}_{\mathrm{D}}\cap S(\mathscr{H})) \cup (\mathscr{H}_{\mathrm{C}}\cap S(\mathscr{H})) \bigr).
\]
Then:
\begin{enumerate}[(a)]
\item every $\psi\in S(\mathscr{H})$ admits a unique decomposition
$\psi = \psi_D + \psi_C$ with $\psi_D\in \mathscr{H}_{\mathrm{D}}$ and $\psi_C\in \mathscr{H}_{\mathrm{C}}$, and
\[
\psi\in A_S
\quad\Longleftrightarrow\quad
\psi_D\neq 0\ \text{and}\ \psi_C\neq 0;
\]
\item $A_S$ is dense in $S(\mathscr{H})$;
\item $\mu(A_S)=1$, i.e.\ $\mu\bigl((\mathscr{H}_{\mathrm{D}}\cap S(\mathscr{H}))\cup(\mathscr{H}_{\mathrm{C}}\cap S(\mathscr{H}))\bigr)=0$.
\end{enumerate}
\end{enumerate}
\end{thm}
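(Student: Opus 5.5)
The three parts are handled separately, with (i) and (ii) pure set algebra and (iii) elementary linear algebra plus a measure-zero argument.

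For (i), I substitute the cover condition from \eqref{eq:conv-partition} directly into the definition: $A=\llbracket\top\rrbracket\setminus\llbracket\top\rrbracket=\varnothing$. Disjointness is not even needed.

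For (ii), I write $B_k:=\llbracket\bP\in I_k\rrbracket$. By \eqref{eq:bins-partition}, $\llbracket\top\rrbracket$ is the disjoint union of $B_1,\dots,B_K$, and by \eqref{eq:extremes}, $\llbracket\Dop\rrbracket\cup\llbracket\Cop\rrbracket=B_1\cup B_K$. Removing $B_1\cup B_K$ from a disjoint union leaves exactly $\bigcup_{k=2}^{K-1}B_k$, because disjointness guarantees that no point of an interior bin is removed. This gives \eqref{eq:A-classical-refined}. When $K\le 2$ the index range is empty, so $A=\varnothing$. A union is nonempty iff some member is nonempty, which gives the first equivalence. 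For the probabilistic statement, finite additivity over the disjoint $B_k$ gives $\Pr(A)=\sum_{k=2}^{K-1}\Pr(B_k)$, and a finite sum of nonnegative terms is positive iff some term is positive.

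For (iii)(a), existence and uniqueness of $\psi=\psi_D+\psi_C$ follow from the orthogonal direct sum \eqref{eq:orth-direct-sum}, with $\psi_D=\hat{\mathcal P}_{\mathrm D}\psi$ and $\psi_C=\hat{\mathcal P}_{\mathrm C}\psi$. I then show $\psi\in\mathscr H_{\mathrm D}$ iff $\psi_C=0$. One direction is immediate. For the other, $\psi\in\mathscr H_{\mathrm D}$ gives the decomposition $\psi=\psi+0$, and uniqueness forces $\psi_C=0$. The same argument works for $\mathscr H_{\mathrm C}$. Since $\psi_D=\psi_C=0$ would give $\psi=0$, which is impossible on the sphere, $\psi\in A_S$ iff both components are nonzero.

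For (b), I take $\psi$ in the exceptional set, say $\psi\in\mathscr H_{\mathrm D}$, and pick a unit $e\in\mathscr H_{\mathrm C}$, which exists because $\mathscr H_{\mathrm C}\neq 0$. I set $\psi_\varepsilon:=(\psi+\varepsilon e)/\|\psi+\varepsilon e\|$. This lies on the sphere, has both components nonzero for $\varepsilon>0$, and tends to $\psi$ as $\varepsilon\to 0$. The case $\psi\in\mathscr H_{\mathrm C}$ is symmetric. Points already in $A_S$ need nothing, so $A_S$ is dense.

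For (c), the step needing the most care is justifying that $\mu$ of a proper subspace's trace on the sphere vanishes. I would realize $\mu$ as the pushforward of the standard complex Gaussian on $\mathscr H\cong\mathbb C^n$ under $x\mapsto x/\|x\|$. This pushforward is unitarily invariant, so by the stated uniqueness it equals $\mu$. Then $\mu(\mathscr H_{\mathrm D}\cap S)$ equals the Gaussian measure of $\mathscr H_{\mathrm D}\setminus\{0\}$. Because $\mathscr H_{\mathrm C}\ne 0$, the subspace $\mathscr H_{\mathrm D}$ has real codimension at least $2$, so it is a Lebesgue-null subspace of $\mathbb R^{2n}$, and the Gaussian, being absolutely continuous, assigns it measure $0$. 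The same holds for $\mathscr H_{\mathrm C}$. By subadditivity the union has $\mu$-measure $0$, hence $\mu(A_S)=1$. A fallback avoiding the Gaussian model is to observe that $\mathscr H_{\mathrm D}\cap S$ has infinitely many pairwise disjoint unitary translates, for instance images under a continuous family of unitaries, each of equal measure, so finiteness of $\mu$ forces the measure to be $0$. I would use the Gaussian argument as the primary route.
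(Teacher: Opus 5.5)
Your proposal is correct and follows the paper's proof essentially step for step. It uses the same set algebra in (i)--(ii), the same uniqueness argument in (iii)(a), and the same perturbation $\psi_\varepsilon=(\psi+\varepsilon e)/\|\psi+\varepsilon e\|$ toward a unit vector of $\mathscr H_{\mathrm C}$ for density. For (iii)(c) it uses the same realization of $\mu$ as the normalized complex Gaussian; your Lebesgue-null (real codimension $\ge 2$) step is an equivalent variant of the paper's ``$\mathscr H_{\mathrm D}\subseteq\ker\ell$'' argument.

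One caution on your unused fallback: when $\dim\mathscr H_{\mathrm D}>n/2$, any two subspaces of that dimension intersect nontrivially, so pairwise disjoint unitary translates of $\mathscr H_{\mathrm D}\cap S(\mathscr H)$ do not exist (e.g.\ $n=3$, $\dim\mathscr H_{\mathrm D}=2$). That route would need an induction on dimension, showing the pairwise overlaps are $\mu$-null, rather than exact disjointness.
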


\begin{proof}
By definition,
\[
A \;=\; \llbracket \top\rrbracket \setminus
\bigl(\llbracket \Dop\rrbracket \cup \llbracket \Cop\rrbracket\bigr).
\]

\medskip
\noindent\emph{(i) Conventional classical DPM.}
Under~\eqref{eq:conv-partition} we have
$\llbracket \Dop\rrbracket \cup \llbracket \Cop\rrbracket
= \llbracket \top\rrbracket$, so
\[
A
= \llbracket \top\rrbracket
  \setminus \bigl(\llbracket \Dop\rrbracket \cup \llbracket \Cop\rrbracket\bigr)
= \llbracket \top\rrbracket \setminus \llbracket \top\rrbracket
= \varnothing.
\]

\medskip
\noindent\emph{(ii) Proposed classical DPM with expectation parameter.}
From~\eqref{eq:bins-partition} and~\eqref{eq:extremes} we have
\[
\llbracket \top\rrbracket
= \bigcup_{k=1}^K \llbracket \bP \in I_k\rrbracket,
\qquad
\llbracket \Dop\rrbracket \cup \llbracket \Cop\rrbracket
= \llbracket \bP \in I_K\rrbracket \cup \llbracket \bP \in I_1\rrbracket.
\]
Hence
\begin{align*}
A
&= \left(\bigcup_{k=1}^K \llbracket \bP \in I_k\rrbracket\right)
   \setminus
   \bigl(\llbracket \bP \in I_1\rrbracket \cup \llbracket \bP \in I_K\rrbracket\bigr) \\
&= \bigcup_{k=2}^{K-1} \llbracket \bP \in I_k\rrbracket,
\end{align*}
where the last step uses pairwise disjointness in~\eqref{eq:bins-partition}: subtracting
the union of two disjoint members of a disjoint union removes exactly those members.

If $K\le 2$, the index set $\{2,\dots,K-1\}$ is empty and thus the union is empty,
so $A=\varnothing$. If $K\ge 3$, then~\eqref{eq:A-classical-refined} implies
\[
A\neq \varnothing
\quad\Longleftrightarrow\quad
\exists\,k\in\{2,\dots,K-1\}\ \text{with}\ \llbracket \bP \in I_k\rrbracket\neq \varnothing,
\]
since a finite union of sets is empty if and only if each member is empty.

Now let $\Pr$ be any probability measure on $\Omega$.
Since the sets $\llbracket \bP \in I_k\rrbracket$ are pairwise disjoint,
finite additivity gives
\[
\Pr(A)
= \sum_{k=2}^{K-1} \Pr\bigl(\llbracket \bP \in I_k\rrbracket\bigr)
= \sum_{k=2}^{K-1} \Pr(\bP \in I_k).
\]
Therefore $\Pr(A) > 0$ if and only if at least one summand
$\Pr(\bP \in I_k)$ with $2\leq k\leq K-1$ is strictly positive.

\medskip
\noindent\emph{(iii) Quantum-like DPM.}
Assume~\eqref{eq:orth-direct-sum}. Since $\mathscr{H} = \mathscr{H}_{\mathrm{D}}\oplus \mathscr{H}_{\mathrm{C}}$, every $\psi\in \mathscr{H}$ admits a
unique decomposition
\[
\psi = \psi_D + \psi_C,
\qquad
\psi_D\in \mathscr{H}_{\mathrm{D}},\ \psi_C\in \mathscr{H}_{\mathrm{C}}.
\]
For $\psi\in S(\mathscr{H})$, we have $\psi\in \mathscr{H}_{\mathrm{D}}$ if and only if $\psi_C = 0$, and
$\psi\in \mathscr{H}_{\mathrm{C}}$ if and only if $\psi_D = 0$. Consequently,
\[
\psi \in A_S
\quad\Longleftrightarrow\quad
\psi \notin \mathscr{H}_{\mathrm{D}}\cap S(\mathscr{H})\ \text{and}\ \psi \notin \mathscr{H}_{\mathrm{C}}\cap S(\mathscr{H})
\quad\Longleftrightarrow\quad
\psi_D\neq 0\ \text{and}\ \psi_C\neq 0,
\]
proving~(iii)(a).

\medskip
\noindent\emph{Density of $A_S$.}
We show that $(\mathscr{H}_{\mathrm{D}}\cap S(\mathscr{H}))\cup(\mathscr{H}_{\mathrm{C}}\cap S(\mathscr{H}))$ has empty interior in $S(\mathscr{H})$ (with the
subspace topology), which implies that its complement $A_S$ is dense.

The set $\mathscr{H}_{\mathrm{D}}\cap S(\mathscr{H})$ is closed in $S(\mathscr{H})$ because $\mathscr{H}_{\mathrm{D}}$ is closed in $\mathscr{H}$, and similarly
$\mathscr{H}_{\mathrm{C}}\cap S(\mathscr{H})$ is closed. We claim $\mathscr{H}_{\mathrm{D}}\cap S(\mathscr{H})$ has empty interior. Fix
$\psi_0\in \mathscr{H}_{\mathrm{D}}\cap S(\mathscr{H})$ and let $U$ be any neighborhood of $\psi_0$ in $S(\mathscr{H})$.
Then there exists $\delta>0$ such that
\[
B_{S(\mathscr{H})}(\psi_0,\delta):=\{\psi\in S(\mathscr{H}):\|\psi-\psi_0\|<\delta\}\subseteq U.
\]
Since $\mathscr{H}_{\mathrm{C}}\neq\{0\}$, pick $u\in \mathscr{H}_{\mathrm{C}}$ with $\|u\|=1$. For $\varepsilon>0$ define
\[
\psi_\varepsilon := \frac{\psi_0 + \varepsilon u}{\|\psi_0 + \varepsilon u\|}\in S(\mathscr{H}).
\]
Because $\psi_0\perp u$, we have $\|\psi_0+\varepsilon u\|=\sqrt{1+\varepsilon^2}$ and thus
\[
\|\psi_\varepsilon-\psi_0\|
=\left\|\frac{\psi_0+\varepsilon u}{\sqrt{1+\varepsilon^2}}-\psi_0\right\|
=\left\|\Bigl(\frac{1}{\sqrt{1+\varepsilon^2}}-1\Bigr)\psi_0
+\frac{\varepsilon}{\sqrt{1+\varepsilon^2}}u\right\|
\;\xrightarrow[\varepsilon\to 0]{}\;0.
\]
Hence for sufficiently small $\varepsilon>0$ we have $\psi_\varepsilon\in U$.
Moreover, $\psi_\varepsilon\notin \mathscr{H}_{\mathrm{D}}$ because it has a nonzero component along $u\in \mathscr{H}_{\mathrm{C}}$,
and $\psi_\varepsilon\notin \mathscr{H}_{\mathrm{C}}$ because it has a nonzero component along $\psi_0\in \mathscr{H}_{\mathrm{D}}$.
Therefore $\psi_\varepsilon\in A_S\cap U$, showing that no neighborhood of $\psi_0$
is contained in $\mathscr{H}_{\mathrm{D}}\cap S(\mathscr{H})$. Thus $\mathscr{H}_{\mathrm{D}}\cap S(\mathscr{H})$ has empty interior, and by symmetry
the same holds for $\mathscr{H}_{\mathrm{C}}\cap S(\mathscr{H})$. Consequently their union has empty interior, and $A_S$ is dense in $S(\mathscr{H})$, proving~(iii)(b).

\medskip
\noindent\emph{Full measure of $A_S$.}
To prove~(iii)(c), it suffices to show $\mu(\mathscr{H}_{\mathrm{D}}\cap S(\mathscr{H}))=\mu(\mathscr{H}_{\mathrm{C}}\cap S(\mathscr{H}))=0$.
Let $X$ be an $\mathscr{H}$-valued random vector whose law is continuous and invariant under
unitary transformations; for instance take $X\sim\mathcal{N}_{\mathbb{C}}(0,I)$
(standard complex Gaussian). Then $\Pr(X=0)=0$, so $X/\|X\|$ is well-defined
almost surely, and the law of $X/\|X\|$ is precisely $\mu$.

Since $\mathscr{H}_{\mathrm{D}}$ is a proper linear subspace of $\mathscr{H}$, there exists a nonzero linear functional
$\ell\colon \mathscr{H}\to\mathbb{C}$ such that $\mathscr{H}_{\mathrm{D}}\subseteq\ker\ell$. The complex random variable
$\ell(X)$ is nondegenerate and has a continuous distribution on $\mathbb{C}$, hence
$\Pr(\ell(X)=0)=0$, and therefore
\[
\Pr(X\in \mathscr{H}_{\mathrm{D}})\le \Pr(\ell(X)=0)=0.
\]
Because $\mathscr{H}_{\mathrm{D}}$ is a cone, $X/\|X\|\in \mathscr{H}_{\mathrm{D}}$ if and only if $X\in \mathscr{H}_{\mathrm{D}}$, so
\[
\mu(\mathscr{H}_{\mathrm{D}}\cap S(\mathscr{H}))
=\Pr(X/\|X\|\in \mathscr{H}_{\mathrm{D}})
=\Pr(X\in \mathscr{H}_{\mathrm{D}})
=0.
\]
Similarly, $\mu(\mathscr{H}_{\mathrm{C}}\cap S(\mathscr{H}))=0$, whence by subadditivity,
\[
\mu\bigl((\mathscr{H}_{\mathrm{D}}\cap S(\mathscr{H}))\cup(\mathscr{H}_{\mathrm{C}}\cap S(\mathscr{H}))\bigr)=0,
\qquad
\mu(A_S)=1.
\]
This completes the proof.
\end{proof}

\section{Conclusions}\label{section:summary}

This paper has revisited the disjunction effect with a specific focus on the
Prisoner's Dilemma (PD) experiment. A common reading holds that the PD
findings are incompatible with the classical law of total probability (LTP)
\eqref{eq:LTP_DC}. We have shown that this conclusion rests on an implicit
certainty-only premise built into the conventional classical
decision-process model (DPM) under the mental-event reading of the
opponent-choice events. When the three information conditions are
identified with the corresponding conditional probabilities
\eqref{eq:identification}, the usual partition assumptions
\eqref{eq:empty_DC}-\eqref{eq:sure_DC} effectively require each participant
to be already certain that the opponent will defect or will cooperate before
making a choice. Once this strong premise is made explicit, the apparent violation of the LTP can be traced to that modeling choice rather than to
any failure of classical logic or probability itself.

To relax this restriction, we proposed a new classical DPM that explicitly
represents \emph{ambiguous} pre-decision expectations. Each participant is
characterized by a continuous expectation parameter
$\bP\in[0,1]$ encoding the anticipated likelihood of opponent defection, and
the participant pool is partitioned into bins $\{I_k\}$ of expectation
levels. The overall defection rate then obeys the classical LTP in the form
\eqref{eq:LTP_P}, expressed as a mixture over expectation bins. The
conventional DPM is recovered as the boundary case in which all probability
mass concentrates at the two extremes of $\bP$. Theorem~\ref{thm:disjunction_realization}
shows that this is not merely a conceptual possibility: for any triple of
empirical defection rates $(d_U,d_D,d_C)\in(0,1)^3$, including the strong
disjunction-effect case $d_U<\min\{d_D,d_C\}$, there exists a simple
three-bin instance of the proposed classical DPM that matches the triple
exactly while fully respecting the classical LTP.

We then placed this classical account alongside a standard quantum-like DPM
for the PD experiment, formulated in a common notation. The quantum-like
model represents opponent-choice events as subspaces of a Hilbert space and
yields a quantum analogue of the LTP \eqref{eq:quantum_LTP}, in which the same
three defection rates are related by an interference term. Interpreted as a
decision-process model, this framework also accommodates decisions under
ambiguous expectations, but locates that ambiguity within a single cognitive
state as a superposition over expectation subspaces and ties events to
measurement (questioning) outcomes. Theorem~\ref{cor:quantum_disjunction_realization}
then establishes that, at the level of the three observable defection rates,
the quantum-like and classical models have equal representational power: for
any triple $(d_U,d_D,d_C)$ produced by the quantum-like DPM, there exists a
three-bin classical DPM of the proposed form that reproduces the same
probabilities. In this sense, the disjunction-effect pattern in the PD
experiment does not itself force a departure from classical probability; what
distinguishes the two approaches is not which triples can be realized,
but how events and ambiguity are represented.
Note that, to the best of our knowledge, 
any mathematical result like Theorem~\ref{thm:disjunction_realization} or 
Theorem~\ref{cor:quantum_disjunction_realization}
has not been known in the quantum cognition literature.

The present analysis has several implications and suggests directions for
future work:
\begin{itemize}
  \item \textbf{Reinterpreting the disjunction effect.}
  The PD disjunction effect need not be read as a breakdown of classical
  probability. Theorem~\ref{thm:disjunction_realization} demonstrates that
  even strong disjunction-effect patterns are compatible with the classical
  LTP once ambiguous expectations are admitted into the decision process.
  What fails is the combination of the certainty-only premise with the
  standard identification of information conditions, not Kolmogorov
  probability itself.

  \item \textbf{Clarifying the role of quantum-like models.}
  Theorem~\ref{cor:quantum_disjunction_realization} shows that, for the PD
  disjunction-effect experiment, quantum-like models do not gain empirical
  reach solely by abandoning the classical LTP: any triple of defection rates
  they generate can already be matched by a classical mixture over
  expectation levels. Their distinctive contribution lies instead in their
  event semantics and in the intra-state representation of ambiguity via
  superposition and interference, which may become crucial in richer
  experimental settings (e.g.\ sequential questioning, order effects, or joint
  tasks) where classical mixtures and quantum dynamics impose genuinely
  different constraints.

  \item \textbf{Classical and quantum-like approaches as complementary tools.}
  By formulating a parallel classical DPM and making explicit the underlying
  event semantics (see Section~\ref{sec:semantics_comparison}), the present framework provides a neutral
  baseline for evaluating when quantum-like structure is substantively needed
  in cognitive modeling. In some tasks, a refined classical model with
  heterogeneous expectations may suffice; in others, the quantum formalism
  may offer more natural or more constrained descriptions. Future work can
  exploit this side-by-side formulation to design experiments that probe the
  distribution of $\bP$, test specific quantum predictions against classical
  mixtures, and extend the analysis beyond PD to other disjunction-effect
  paradigms and related phenomena.
\end{itemize}

In summary, the PD disjunction-effect experiment is better viewed not as direct
evidence against classical probability, but as a challenge to how pre-decision
ambiguity is modeled. The choice between classical and quantum-like
formalisms should therefore be guided by explicit assumptions about events,
measurements, and uncertainty in the task at hand, rather than by a
presumed violation of the law of total probability.

\makeatletter
\@ifundefined{doi}
  {\newcommand{\doi}[1]{\url{https://doi.org/#1}}}
  {\renewcommand{\doi}[1]{\url{https://doi.org/#1}}}
\makeatother

\bibliographystyle{unsrtnat}
\bibliography{refs}

\end{document}